\documentclass[11pt]{article}
\pdfoutput=1
\usepackage{palatcm, color, verbatim}
\usepackage{graphicx,graphics, epsfig, psfrag, subfigure, epstopdf}
\usepackage{algorithm, algorithmic, calc,  url, array}
\usepackage{amssymb, amsmath}
\usepackage{multicol,balance}

\setlength{\oddsidemargin}{0in}
\setlength{\evensidemargin}{0in}
\headheight=0in
\headsep=0in
\setlength{\textwidth}{6.5in}
\textheight=8.7in

\bibliographystyle{IEEEtran}

\begin{document}
\newtheorem{theorem}{Theorem}
\newtheorem{corollary}{Corollary}
\newtheorem{definition}{Definition}
\newtheorem{lemma}{Lemma}

\newcommand{\define}{\stackrel{\triangle}{=}}

\def\QED{\mbox{\rule[0pt]{1.5ex}{1.5ex}}}
\def\proof{\noindent\hspace{2em}{\it Proof: }}

\date{}
\title{Aligned Interference Neutralization and the\\ Degrees of Freedom of the $2\times 2\times 2$ Interference Channel}
\author{\normalsize Tiangao Gou, Syed A. Jafar, Sang-Woon  Jeon, Sae-Young Chung\\
{\small Email: tgou@uci.edu, syed@uci.edu, swjeon@kaist.ac.kr, sychung@ee.kaist.ac.kr}
       }
%% Notes
\maketitle

\thispagestyle{empty}

%%%%%%%%%%%%%%%%%%%% Abstract %%%%%%%%%%%%%%%%%%%%%%%%%
\begin{abstract}
We show that the $2\times 2\times 2$ interference network, i.e., the multihop interference network formed by concatenation of two 2-user interference channels  achieves the min-cut outer bound value of 2 DoF, for almost all values of channel coefficients, for both time-varying or fixed channel coefficients. The key to this result is a new idea, called aligned interference neutralization, that provides a way to align  interference terms over each hop in a manner that allows them to be cancelled over the air at the last hop.
\end{abstract}

%%%%%%%%%%%%%%%%%%%% Introduction %%%%%%%%%%%%%%%%%%%%%%%%%
\section{Introduction}
Recent years have seen rapid progress in our understanding of the capacity limits of wireless networks. Some of the most remarkable advances have come about in the settings of  (a) multihop multicast, where capacity (within constant gap that is independent of SNR and channel parameters) is  given by the network min-cut \cite{Avestimehr_Diggavi_Tse} and  (b) single hop interference networks, for which a variety of capacity approximations have been obtained in the form of degrees of freedom (DoF) characterizations (e.g., \cite{Jafar_Shamai, Cadambe_Jafar_int, Cadambe_Jafar_X, Motahari_Gharan_Maddah_Khandani}), generalized degrees of freedom (GDOF) (e.g., \cite{Parker_Bliss_Tarokh, Jafar_Vishwanath_GDOF, Bandemer_ElGamal,Huang_Cadambe_Jafar}), O(1) approximations  (e.g. \cite{Gou_Jafar_O1,Gou_Jafar_Wang}), constant gap approximations  (e.g., \cite{Etkin_Tse_Wang, Bresler_Parekh_Tse, Suh_Tse_FB, Tse_Yates}), and exact capacity results (e.g., \cite{MK_int, Shang_Kramer_Chen, Sreekanth_Veeravalli,Sreekanth_Veeravalli_MIMO, Sridharan_Jafarian_Vishwanath_Jafar, Jose_Vishwanath, Cadambe_Jafar_MACZBC, Jafar_ergodic}).
\begin{figure}[!h]
\centering
\includegraphics[width=6.5in]{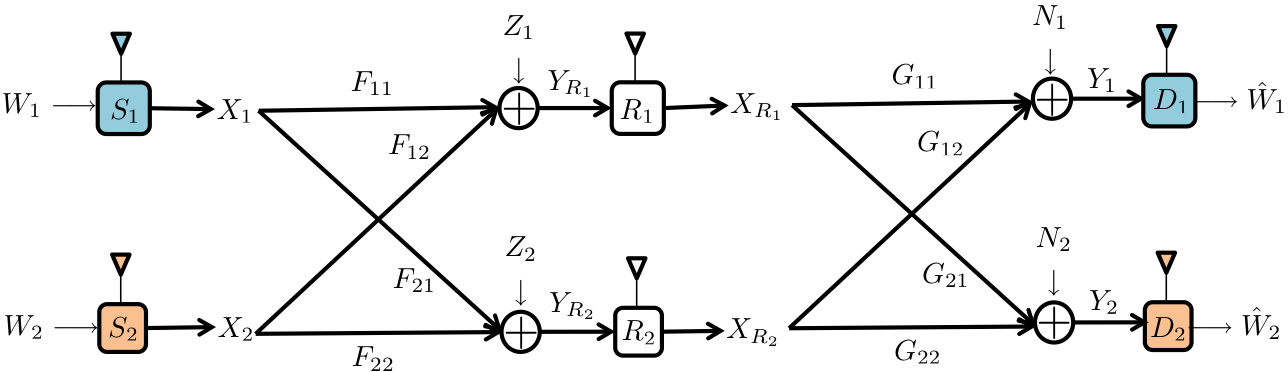}
\caption{$2\times 2\times 2$ IC}
\label{fig:222}
\end{figure}

In spite of the rapid advances in our understanding of multihop multicast and single hop interference networks, relatively little progress has been made so far in our understanding of the fundamental limits of \emph{multihop interference} networks. Of particular interest are layered multihop interference networks formed by concatenation of single hop networks so that each node can only be heard by the nodes in the next layer. Much of the prior work on such multihop interference networks has focused on either the setting where there is a single relay node that is equipped with multiple antennas \cite{Borade_Zheng_Gallager, Tannious_Nosratinia,Chen_Cheng}, or the setting where there is a large number of distributed relay nodes \cite{Boelcskei_Nabar_Oyman_Paulraj,Morgenshtern_Boelcskei, Esli_Wittneben,Rankov_Wittneben, Wittneben_mag}. Relatively little is known about multihop interference networks when the relays are distributed, equipped with only a single antenna each, and there are not many of them. As the simplest example of such a network, the setting shown in Fig. \ref{fig:222} is of fundamental interest. It is remarkable that even a coarse capacity characterization in the form of degrees-of-freedom (DoF) is not available for this network in general. The network of Fig. \ref{fig:222} is the focus of this paper and will henceforth be referred to as the $2\times 2\times 2$ IC. We are interested specifically in the DoF of this network.  We start by reviewing prior work on this channel, which is based on the available insights from the study of the two user interference channel, interference alignment principles and the ideas of distributed zero forcing (interference neutralization).

\subsection*{Interference Channel Approach}
One approach  to the $2\times 2\times 2$ IC is to view it as a cascade of two interference channels. This approach is appealing because the 2 user interference channel is a much-studied problem in network information theory and the abundance of insights developed into this problem, both classical and recent, can be immediately applied to the $2\times 2\times 2$ IC viewed as a concatenation of interference channels. The approach is first explored by Simeone et. al. \cite{Simeone_mesh}, who view the first hop as an interference channel,  and apply the Han-Kobayashi scheme to split each message into private and common parts. This opens the door to cooperation between the relay nodes  in the second hop based on the shared common message knowledge. Simeone et. al. \cite{Simeone_mesh} explore the gains from coherent combining of the common message signals in the second hop. The limited cooperation between relay nodes is explored as a distributed multiple input single output (MISO) broadcast channel by Thejasvi et. al. in  \cite{Thejasvi_Calderbank_Cochran}. Recently, Cao and Chen \cite{Cao_Chen} explore the $2\times 2\times 2$ IC,  considering each hop as an interference channel, with the added prospect of switching the roles of the relays to, e.g., convert strong interference channels to weak interference channels, which is shown to have significant benefits in achievable data rates in certain SNR regimes.

\subsection*{Interference Alignment Approach}
Any approach that treats either hop (or both hops) of the $2\times 2\times 2$ IC  as an interference channel can only achieve a maximum of 1 DoF, because of the bottleneck created by the 2 user interference channel which has only 1 DoF \cite{Nosratinia_Madsen}. Interestingly, the interference channel approach is highly suboptimal at high SNR. This is because the interference channel approach precludes interference alignment.

Interference alignment refers to a consolidation of undesired signals into smaller dimensions so that the signaling dimensions available for desired signals at each receiver are maximized. Interference alignment was observed first by Birk and Kol \cite{Birk_Kol} for the index coding problem, and then by Maddah-Ali et. al. for the X channel in \cite{MMK}, followed by Weingarten et. al. for the compound vector broadcast channel in \cite{Weingarten_Shamai_Kramer}. The idea was crystallized as a \emph{general} concept  in \cite{Jafar_Shamai, Cadambe_Jafar_int} by Jafar and Shamai, and Cadambe and Jafar, respectively, and has since been applied in increasingly sophisticated forms \cite{Bresler_Parekh_Tse, Nazer_Gastpar_Jafar_Vishwanath, Cadambe_Jafar_Wang, Etkin_Ordentlich, Cadambe_Jafar_Shamai, Sridharan_Jafarian_Vishwanath_Jafar, Motahari_Gharan_Maddah_Khandani, Jafar_corr, Maddah_Tse, Cadambe_Jafar_Maleki} across a variety of communication networks --- both wired and wireless -- often leading to surprising new insights.

Unlike the interference channel approach which can achieve no more than 1 DoF, Cadambe and Jafar show in \cite{Cadambe_Jafar_X} that the $2\times 2\times 2$ IC can achieve $\frac{4}{3}$ DoF almost surely. This is accomplished by a decode and forward approach that treats each hop as an X channel. Specifically, each transmitter divides its message into two independent parts, one intended for each relay. This creates a total of $4$ messages over the first hop, one from each source to each relay node, i.e.,  the $2\times 2$ X channel setting. After decoding the messages from each transmitter, each relay has a message for each destination node, which places the second hop into the X channel setting as well. It is known that the $2\times 2$ X channel with single antenna nodes has $\frac{4}{3}$ DoF. The result was shown first by Jafar and Shamai in \cite{Jafar_Shamai} under the assumption that the channel coefficients are time-varying. By using  a combination of linear beamforming, symbol extensions and asymmetric complex signaling,  Cadambe et. al. showed in  \cite{Cadambe_Jafar_Wang}  that $\frac{4}{3}$ DoF are achievable on the $2\times 2$ X channel even if the channels are held constant for almost all values of channel coefficients. Motahari et. al. \cite{Motahari_Gharan_Khandani_real} proposed the framework of rational dimensions which allows $\frac{4}{3}$ DoF to be achieved almost surely even if the channels are fixed and restricted to real values. Thus, regardless of whether the channels are time-varying or constant and whether they can take complex or only real values, interference alignment through the X channel approach allows the $2\times 2\times 2$ IC to achieve $\frac{4}{3}$ DoF for almost all channel coefficient values.

{\it Remark:} \emph{$\frac{4}{3}$ is the highest achievable DoF result known so far for the $2\times 2\times 2$ IC that is applicable to almost all channel coefficient values.}

\subsection*{Interference Neutralization Approach}
Interference neutralization refers to  the distributed zero forcing of interference when the interfering signal passes through multiple nodes before arriving at the undesired destination. While the terminology \emph{interference neutralization} is more recent \cite{Mohajer_Diggavi_Fragouli_Tse, Mohajer_Tse_Diggavi}, the same essential idea has been around for many years, known by other names such as \emph{distributed orthogonalization}, \emph{distributed zero-forcing}, \emph{multiuser zero-forcing} and  \emph{orthogonalize-and-forward} (see e.g., \cite{Wittneben_mag, Gomadam_Jafar_corr}). A fundamental question for interference neutralization is the minimum number of relays necessary to eliminate all interference. Rankov and Wittneben show in \cite{Rankov_Wittneben} that for the $K\times R\times K$ interference network, a necessary condition for interference neutralization is that $R\geq K(K-1)+1$ relays. Thus, with $2$ sources and $2$ destination nodes, a minimum of $3$ relay nodes is needed for interference neutralization. However, our $2\times 2\times 2$ IC has only $2$ relays, making interference alignment apparently infeasible. 

\subsubsection*{Partially Connected Setting} If the channel over each hop is not fully connected, perfect interference neutralization \emph{is} possible with only $2$ relays. This setting is explored by  Mohajer et. al. in  \cite{Mohajer_Diggavi_Fragouli_Tse,Mohajer_Tse_Diggavi} starting from the deterministic channel setting which leads to a sophisticated re-interpretation of interference neutralization over lattice codes, and ultimately a constant bit gap capacity characterization. In this work, however, we are interested in cases where each hop is fully connected.

\subsubsection*{Opportunistic Interference Neutralization}

The idea of \emph{opportunistic} interference neutralization is introduced by Jeon et. al. in \cite{Jeon_Chung, Jeon_Chung_Jafar}. For a broad class of channel distributions, which includes the commonly studied i.i.d. Rayleigh fading setting, Jeon et. al. show that the DoF achieved correspond to the network min-cut, e.g., for the $2\times 2\times 2$ IC, the DoF = 2, i.e., interference-free transmission is possible without any DoF penalty. This is especially remarkable because no more than half the network min-cut is achievable in a single hop interference network.

Opportunistic interference neutralization is easily understood as follows. For the $2\times 2\times 2$ IC of Fig. \ref{fig:222} consider the setting where the product of the channel matrices $F_{2\times 2}\times G_{2\times 2}$ is a diagonal matrix. Clearly in this case if each relay simply forwards its received signal, the effective end to end channel matrix is a diagonal matrix, i.e, the interference-carrying channel coefficients are reduced to zero, creating a non-interfering channel from each source to its destination. Surprisingly, in this case, even though the channel matrix over each hop may be fully connected the network DoF =2, i.e., the min-cut is achieved. Channel matrices $F$ and $G$ are called complementary matrices if their product is a diagonal matrix with non-zero elements. The idea of opportunistic interference neutralization is to permute the scheduling time instants where the signals from the first hop are transmitted over the second hop, such that the channel matrices of the two hops are complementary. Specifically the relays buffer the received signals and forward a particular received signal only when the second hop channel realization is in a complementary state.

Opportunistic interference neutralization is similar to the ergodic interference alignment scheme of Nazer et. al. \cite{Nazer_Gastpar_Jafar_Vishwanath} for the single hop $K$ user interference channel in the sense that both involve very simple coding and that both are restrictive in their reliance on opportunistic matching of complementary states.  For general channel distributions, even with a large number of parallel channel states, complementary states may not even exist, much less be available in equal proportions to allow one to one matching. The challenge is further compounded if the channel states are held constant. For single hop interference channels it is known that the DoF achieved with ergodic alignment are also achievable for generic time-varying channels through the asymptotic alignment scheme of Cadambe and Jafar  \cite{Cadambe_Jafar_int}. As explained in \cite{SPCOM_Plenary}, even for constant channels (with real or complex channel coefficients) the asymptotic alignment scheme of Cadambe and Jafar can be applied within the rational dimensions framework of Motahari et. al. \cite{Motahari_Gharan_Maddah_Khandani} to achieve the same DoF. Thus, the same $K/2$ DoF are achieved almost surely for all continuous  channel distributions, i.e., DoF are not sensitive to the symmetries of the channel distribution assumed for ergodic interference alignment. However, for multihop interference networks it is not known if the DoF outer bound corresponding to the network min-cut, is achievable for all continuous distributions without relying on the symmetries that allow ergodic pairing of complementary channel states. In particular, for constant channels it is not known if the min-cut is achievable for almost all values of channel coefficients. This brings us to the central question motivating this work.

\subsection*{Main Question - Is the Network DoF Min-Cut Achievable for Generic Channels?}
Consider the $2\times 2\times 2$ IC with generic channel coefficients. The network DoF min-cut value is $2$. If this is to be achieved,  evidently we must have interference-free transmission from each source to its destination, i.e., we need interference neutralization. This raises a significant challenge -- as mentioned in the previous section we need to have $3$ relays to achieve interference neutralization and we have only $2$.  For instance, if the relays $R_1, R_2$ amplify and forward their received signals,  using amplification factors $\alpha_1, \alpha_2$ respectively, then for interference neutralization one should have:
\begin{eqnarray}
F_{11}\alpha_1G_{21}+F_{21}\alpha_2G_{22}=0 &\Rightarrow&\alpha_1/\alpha_2=-F_{21}G_{22}/F_{11}G_{21}\label{eq:neu1}\\
F_{22}\alpha_2G_{12}+F_{12}\alpha_1G_{11}=0&\Rightarrow&\alpha_1/\alpha_2=-F_{22}G_{12}/F_{12}G_{11}\label{eq:neu2}
\end{eqnarray}
Clearly, for generic channels $F_{21}G_{22}/F_{11}G_{21}\neq F_{22}G_{12}/F_{12}G_{11}$, almost surely, so that equations $(\ref{eq:neu1}),(\ref{eq:neu2})$ cannot be simultaneously satisfied.

\subsection*{Main Result - Network DoF Min-Cut is Achievable for Generic Channels}
The main result of this paper is that the min-cut outer bound 2 DoF can be achieved for generic channel coefficients, which is presented in the following theorem.
\begin{theorem}
For the $2\times 2\times 2$ IC with time-varying or constant channel coefficients, the total number of DoF is equal to 2, almost surely.
\end{theorem}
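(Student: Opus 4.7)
The min-cut outer bound across either hop gives $2$ DoF, so only achievability needs to be shown. My plan is to split each source's message into two independent sub-messages, one per destination, so that each hop is viewed as a $2\times 2$ X channel carrying four messages $W_{ij}$, $i,j\in\{1,2\}$. Over a symbol extension of length $N$, source $i$ transmits $\mathbf{x}_i = \mathbf{V}_{i1}\mathbf{u}_{i1}+\mathbf{V}_{i2}\mathbf{u}_{i2}$, where $\mathbf{u}_{ij}$ is a vector of $d_{ij}$ independent data streams, and the relays apply only simple linear (scalar amplify-and-forward) processing before transmitting over the second hop. The target is $(d_{11}+d_{12}+d_{21}+d_{22})/N \to 2$ as $N$ grows.

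The new ingredient, aligned interference neutralization, couples first-hop alignment with second-hop neutralization in order to bypass the obstruction in (\ref{eq:neu1})--(\ref{eq:neu2}). My plan is to design the beamforming matrices $\mathbf{V}_{ij}$ so that, at each relay, the two ``wrong-destination'' contributions collapse into a common one-dimensional subspace; concretely, at relay $k$ the destination-$2$ interference $F_{k1}\mathbf{V}_{12}\mathbf{u}_{12}+F_{k2}\mathbf{V}_{22}\mathbf{u}_{22}$ occupies a single direction whose \emph{structure} is identical at both relays, and symmetrically for destination-$1$ streams at the other pair of beamformers. Once alignment holds, the leakage that each destination sees after the second hop is effectively one-dimensional per interfering source, and neutralization collapses from the two incompatible conditions (\ref{eq:neu1})--(\ref{eq:neu2}) to a \emph{single} linear condition on the relay gains, which can be enforced almost surely for generic $G$.

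To realize the alignment in the time-varying case, I would pick the columns of $\mathbf{V}_{ij}$ from a subspace spanned by monomials in the commuting diagonal matrices formed by the per-symbol channel gains of $F$ and $G$, in the spirit of the asymptotic alignment construction of Cadambe and Jafar. Because these matrices commute, the required inclusions of image subspaces at each relay reduce to inclusions of monomial sets; letting the monomial degree (and hence $N$) grow polynomially absorbs the extra dimensions needed for alignment and drives the ratio $(d_{11}+d_{12}+d_{21}+d_{22})/N$ to $2$. For constant channels I would implement the same construction inside the rational-dimensions framework of Motahari, Gharan, Maddah-Ali and Khandani, substituting integer combinations of rationally independent reals built from the entries of $F$ and $G$ for vector symbol extensions, which preserves the alignment-plus-neutralization argument and yields $2$ DoF for almost every realization.

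The hardest step will not be setting up the alignment but verifying, for generic $F$ and $G$, that the construction simultaneously (i) satisfies alignment at \emph{both} relays and neutralization at \emph{both} destinations using only source beamforming and scalar relay gains, and (ii) keeps the desired signal components at each destination linearly independent from each other and from the neutralized interference directions after the second hop. Both reduce to the non-vanishing of explicit polynomials in the channel coefficients, which holds off a measure-zero set by a standard Schwartz--Zippel-type genericity argument; this is what converts the scheme into an ``almost surely'' statement in both the time-varying and constant-channel regimes.
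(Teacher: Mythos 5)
Your proposal departs from the paper in several ways, and as written it has two gaps that would prevent it from closing the argument.

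First, the decomposition is different and, more importantly, it pulls you toward a trap the paper explicitly avoids. You split each $W_i$ into two parts ``one per destination'' and treat each hop as a $2\times 2$ X~channel carrying $W_{ij}$. The paper does \emph{not} do this; indeed it points out that the per-hop X~channel decode-and-forward approach of Cadambe and Jafar tops out at $\tfrac{4}{3}$ DoF. In the paper, $W_1$ remains a single message for $D_1$ and $W_2$ for $D_2$; they are split into $M$ and $M-1$ ``sub-messages'' purely to populate a length-$M$ symbol extension. The alignment is then between \emph{desired} streams of the two users with an index offset: $\mathbf{F}_{11}\mathbf{v}_{1,i+1}=\mathbf{F}_{12}\mathbf{v}_{2,i}$ at $R_1$ and $\mathbf{F}_{21}\mathbf{v}_{1,i}=\mathbf{F}_{22}\mathbf{v}_{2,i}$ at $R_2$. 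That shift by one is the whole trick: relay~1 sees $x_{1,i+1}+x_{2,i}$ in dimension $i{+}1$ and relay~2 sees $x_{1,i}+x_{2,i}$ in dimension $i$; after the second hop aligns dimension $i{+}1$ of $R_1$ with dimension $i$ of $R_2$ with opposite signs, $D_1$ receives $x_{1,1}$ cleanly and then $x_{1,i+1}-x_{1,i}$, which it resolves by successive cancellation, and dually for $D_2$. Your ``collapse all wrong-destination interference into a common one-dimensional subspace'' description does not contain this offset/cascade structure, and without it you lose $M-1$ clean dimensions per user; you would land back near $\tfrac{4}{3}$.

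Second, scalar amplify-and-forward at the relays is not sufficient, and this is a genuine obstruction rather than a detail to defer to genericity. A scalar relay gain rescales the entire received vector, so the signal leaves the relay along directions $\mathbf{F}_{k1}\mathbf{v}_{1,i}$, $\mathbf{F}_{k2}\mathbf{v}_{2,i}$ that were designed for the first-hop alignment and have no reason to satisfy the second-hop neutralization conditions $\mathbf{G}_{11}\mathbf{v}_{R_1,i+1}=-\mathbf{G}_{12}\mathbf{v}_{R_2,i}$ and $-\mathbf{G}_{21}\mathbf{v}_{R_1,i}=\mathbf{G}_{22}\mathbf{v}_{R_2,i}$. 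With a single free ratio $\alpha_1/\alpha_2$ you are exactly in the obstructed situation of (\ref{eq:neu1})--(\ref{eq:neu2}); alignment does not magically merge the two incompatible conditions because the second-hop channels $\mathbf{G}_{kj}$ act independently of $\mathbf{F}_{kj}$. The paper instead has each relay invert the effective $M\times M$ first-hop matrix to isolate the per-dimension sums and then \emph{re-beamform} with fresh vectors $\mathbf{v}_{R_j,k}$ chosen from the second-hop channels. This $M$-dimensional linear processing is what creates the $M-1$ sign-flipped alignment equations per destination. Finally, the construction you invoke (Cadambe--Jafar asymptotic monomial alignment with Schwartz--Zippel genericity) is also not what the paper uses: for any fixed $M$ the paper's alignment is \emph{exact}, the beamformers $\mathbf{v}_{1,1},\dots,\mathbf{v}_{1,M}$ are the columns of a Vandermonde matrix in the distinct diagonal entries of $\mathbf{A}=\mathbf{F}_{11}^{-1}\mathbf{F}_{12}\mathbf{F}_{22}^{-1}\mathbf{F}_{21}$, and full rank follows directly from the Vandermonde determinant being nonzero almost surely, not from an approximate-alignment limit. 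The constant-channel case is handled by the rational-dimensions framework (and separately by asymmetric complex signaling for the $\tfrac32$ DoF linear result), which you do gesture at, but that substitution only preserves the argument once the offset alignment and relay re-beamforming are in place.
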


\begin{figure}[!t]
\centering
\includegraphics[width=4.5in]{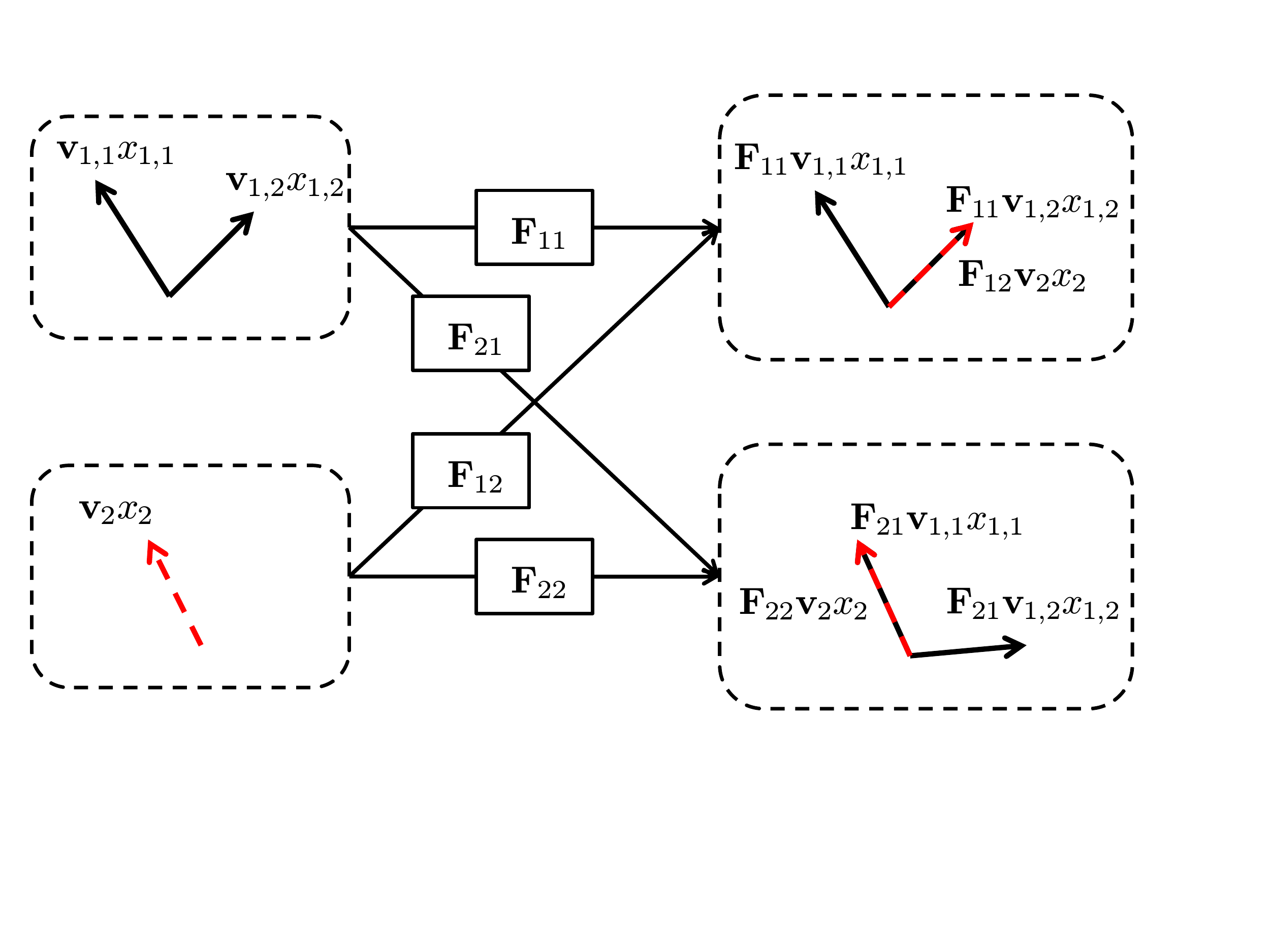}
\caption{Alignment at relays for $M=2$}
\label{fig:alignrelay}
\end{figure}

Like the DoF results for single hop interference networks \cite{Cadambe_Jafar_int}, the problem needs to be solved in high dimensions and in an asymptotic sense. Specifically, we show that with $M$ dimensions, whether it is time, frequency, or rational dimensions, message $W_1$ can access $M$ interference free dimensions while $W_2$ can access $M-1$ interference free dimensions. Thus, $\frac{2M-1}{M}$ DoF can be achieved. Since $M$ can be chosen arbitrarily large, the achieved number of DoF is arbitrarily close to the min-cut bound of $2$ for generic channel coefficients. In other words, almost perfect interference neutralization is achieved asymptotically. The key to this result is a new idea called \emph{aligned interference neutralization} which combines the idea of interference alignment and interference neutralization. While we are primarily interested in the constant channel setting, for ease of exposition we will explain this idea for the case $M=2$ and with time-varying channel coefficients where $\frac{3}{2}$ DoF can be achieved. Ultimately the same time-varying solution will be translated to the case of constant channel coefficients using the rational dimensions framework of \cite{Motahari_Gharan_Maddah_Khandani}. 

Consider a $M=2$ symbol extension of the original network. Then the channel becomes a $2\times 2$ diagonal matrix with distinct diagonal entries. We will show that $W_1$ can achieve 2 DoF while $W_2$ can achieve 1 DoF for a total of  $2M-1=3$ DoF. Source node $S_1$ sends two independent symbols, $x_{1,1}$ and $x_{1,2}$ along beamforming vectors $\mathbf{v}_{1,1}$ and $\mathbf{v}_{1,2}$, respectively. Similarly, source node $S_2$ sends one symbol $x_2$ along beamforming vector $\mathbf{v}_2$. As shown in Fig. \ref{fig:alignrelay}, we design beamforming vectors such that after going through their respective channels, $\mathbf{v}_{1,2}$ and $\mathbf{v}_2$ are along the same direction at relay $R_1$ while $\mathbf{v}_{1,1}$ and $\mathbf{v}_2$ are along the same direction at relay $R_2$, i.e.,
\begin{eqnarray}
\mathbf{F}_{11}\mathbf{v}_{1,2}&=&\mathbf{F}_{12}\mathbf{v}_{2}\\
\mathbf{F}_{21}\mathbf{v}_{1,1}&=&\mathbf{F}_{22}\mathbf{v}_2
\end{eqnarray}
Note that $\mathbf{v}_2$ can be chosen randomly and $\mathbf{v}_{1,1}$ and $\mathbf{v}_{1,2}$ can be solved according to above equations.

\begin{figure}[!t]
\centering
\includegraphics[width=4.5in]{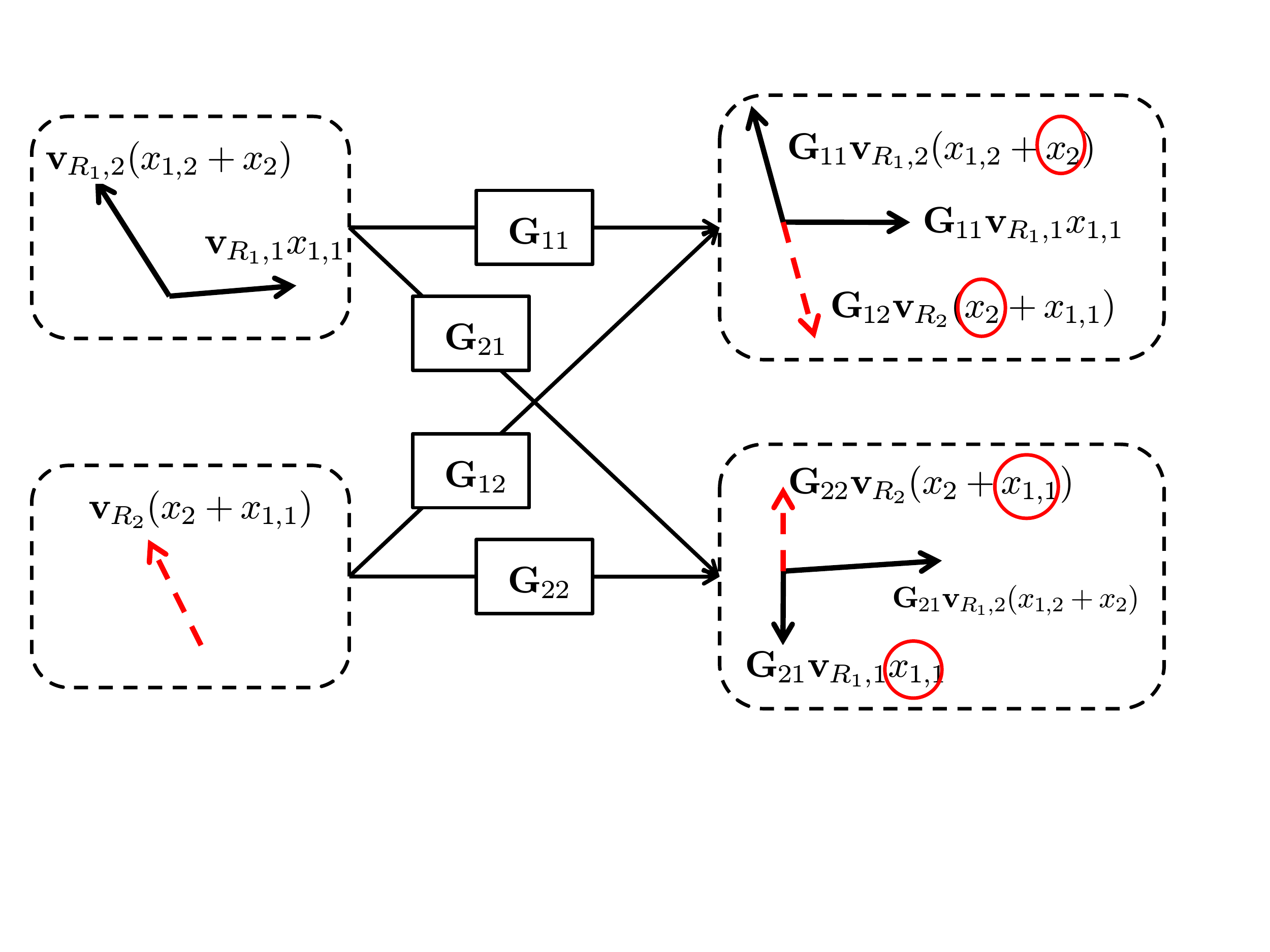}
\caption{Aligned interference neutralization at destinations}
\label{fig:aligndestination}
\end{figure}
After alignment,  $x_{1,1}$ and $x_{1,2}+x_2$ can be isolated through a simple channel matrix inversion operation at $R_1$ while $x_{1,1}+x_2$ and $x_{1,2}$ can be isolated at $R_2$. Then relay $R_1$ sends $x_{1,1}$ and $x_{1,2}+x_2$ in the presence of noise with beamforming vectors $\mathbf{v}_{R_1,1}$ and $\mathbf{v}_{R_1,2}$, respectively. Similarly, relay $R_2$ sends $x_2+x_{1,1}$ in the presence of noise along beamforming vector $\mathbf{v}_{R_2}$. As shown in Fig. \ref{fig:aligndestination}, to neutralize interference $x_2$ at destination $D_1$, we choose
\begin{eqnarray}
\mathbf{G}_{11}\mathbf{v}_{R_1,2}=-\mathbf{G}_{12}\mathbf{v}_{R_2}
\end{eqnarray}
Similarly, to neutralize interference $x_{1,1}$ at destination $D_2$, we choose
\begin{eqnarray}
\mathbf{G}_{21}\mathbf{v}_{R_1,1}=-\mathbf{G}_{22}\mathbf{v}_{R_2}
\end{eqnarray}
Again, $\mathbf{v}_{R_2}$ can be chosen randomly, then $\mathbf{v}_{R_1,1}$ and $\mathbf{v}_{R_1,2}$ can be calculated. After interference neutralization, $D_1$ sees $x_{1,2}-x_{1,1}$ along direction $\mathbf{G}_{11}\mathbf{v}_{R_1,2}$ and $x_{1,1}$ along $\mathbf{G}_{11}\mathbf{v}_{R_1,1}$. Therefore, it can decode $x_{1,1}$ first, and then $x_{1,2}$ to achieve 2 DoF. At $D_2$, $x_2$ is received interference free along $\mathbf{G}_{22}\mathbf{v}_{R_2}$ and it can be decoded by discarding the dimension $\mathbf{G}_{21}\mathbf{v}_{R_1,2}$ along which interference is received.

\section{Channel Model}
The $2 \times 2 \times 2$ IC as shown in Fig. \ref{fig:222} is comprised of two sources, two relays and two destinations. Each source node has a message for its respective destination. In the first hop, the received signal at relay $R_k$, $k\in\{1, 2\}$ in time slot $t$ is
\begin{eqnarray}\label{eq:firsthop}
Y_{R_k}(t)=F_{k1}(t)X_1(t)+F_{k2}(t)X_2(t)+Z_{k}(t)
\end{eqnarray}
where $F_{kj}(t)$, $\forall k, j\in\{1,2\}$, is the complex channel coefficient from source $S_j$ to relay $R_k$, $X_j(t)$ is the input signal from  $S_j$, $Y_{R_k}(t)$ is the received signal at relay $R_k$ and $Z_{k}(t)$ is the independent identically distributed (i.i.d.)  zero mean unit variance circularly symmetric complex Gaussian noise. In the second hop, the received signal at destination $D_k$ in time slot $t$ is given by
\begin{eqnarray}\label{eq:secondhop}
Y_{k}(t)=G_{k1}(t)X_{R_1}(t)+G_{k2}(t)X_{R_2}(t)+N_k(t)
\end{eqnarray}
where $G_{kj}(t)$, $\forall k, j\in\{1,2\}$, is the complex channel coefficient from relay $R_j$ to destination $D_k$, $X_{R_j}(t)$ is the input signal from relay $R_j$, $Y_{k}(t)$ is the received signal at $D_k$ and $N_{k}(t)$ is the i.i.d. zero mean unit variance circularly symmetric complex Gaussian noise. We assume every node in the network has an average power constraint $P$. The relays are full-duplex. We assume that source nodes only know the channels in the first hop, relays know channels in both hops and destination nodes know channels in the second hop.  To avoid degenerate conditions, we assume the absolute values of all channel coefficients are bounded between a nonzero minimum value and a finite maximum value. We will consider two settings where channel coefficients are time-varying or constant.
\begin{enumerate}
\item Channels $F_{kj}(t)$ and $G_{kj}(t)$ are time varying, i.e., the channel coefficients change and are drawn i.i.d. from a continuous distribution for every channel use.
\item Channels $F_{kj}(t)$ and $G_{kj}(t)$ are constant, i.e., the channel coefficients are drawn i.i.d. from a continuous distribution before the transmissions. Once they are drawn, they remain unchanged during the entire transmission. In this case, we will omit the time index for simplicity.
\end{enumerate}

As shown in Fig. \ref{fig:222}, there are two messages in the network. Source $S_k$, $k\in \{1,2\}$ has a message $W_k$ for destination $D_k$. We denote the size of message $W_k$ as $|W_k|$. For the codewords spanning $n$ channel uses, the rates $\mathcal{R}_k=\frac{\log(|W_k|)}{n}$ are achievable if the probability of error for both messages can be simultaneously made arbitrarily small by choosing an appropriately large $n$. The sum-capacity $\mathcal{C}_{\Sigma}(P)$ is the maximum achievable sum rate. The number of degrees of freedom is defined as
\begin{eqnarray}
d=\lim_{P\rightarrow \infty}\frac{\mathcal{C}_{\Sigma}(P)}{\log P}
\end{eqnarray}

\section{Aligned Interference Neutralization}

\subsection{Time-varying channel coefficients - linear scheme}\label{sec:Timevarying}
In this section, we consider the case when the channel coefficients are time-varying. We will show that over $M$ symbol extensions of the original channel, $W_1$ can achieve $M$ DoF while $W_2$ can achieve $M-1$ DoF for a total of $2M-1$ DoF. Thus, the normalized DoF are $\frac{2M-1}{M}$. As $M\rightarrow \infty$, 2 DoF can be achieved almost surely. With $M$ symbol extensions, we effectively have an $M\times M$ MIMO channel with diagonal channel matrices of distinct diagonal entries. Specifically, the channel input-output relations become
\begin{eqnarray}
\mathbf{Y}_{R_k}(t)&=&\mathbf{F}_{k1}(t)\mathbf{X}_1(t)+\mathbf{F}_{k2}(t)\mathbf{X}_2(t)+\mathbf{Z}_{k}(t)\\
\mathbf{Y}_{k}(t)&=&\mathbf{G}_{k1}(t)\mathbf{X}_{R_1}(t)+\mathbf{G}_{k2}(t)\mathbf{X}_{R_2}(t)+\mathbf{N}_k(t)
\end{eqnarray}
where
\begin{eqnarray*}
\mathbf{F}_{kj}(t)&=&\left[\begin{array}{cccc}F_{kj}(Mt+1)& 0 & \cdots & 0 \\ 0 & F_{kj}(Mt+2) & \cdots & 0 \\ \vdots & \cdots& \ddots& \vdots \\ 0 & 0 & \cdots & F_{kj}(Mt+M) \end{array}\right]\\
\mathbf{G}_{kj}(t)&=&\left[\begin{array}{cccc}G_{kj}(Mt+1)& 0 & \cdots & 0 \\ 0 & G_{kj}(Mt+2) & \cdots & 0 \\ \vdots & \cdots& \ddots& \vdots \\ 0 & 0 & \cdots & G_{kj}(Mt+M) \end{array}\right]~~~~~~~~\forall k,j \in\{1,2\}
\end{eqnarray*}
and $\mathbf{X}$,  $\mathbf{Y}$, $\mathbf{Z}$ and $\mathbf{N}$ are $M\times 1$ vectors representing $M$ symbol extensions of $X$, $Y$, $Z$ and $N$, respectively. In the following, we will omit the time index for simplicity.

\subsubsection*{Sources:}
At source node $S_1$, message $W_1$ is split into $M$ sub-messages. Sub-message $W_{1,k_1}$, $k_1\in\{1,\ldots,M\}$, is encoded using a Gaussian codebook with rate equal to 1 DoF and codeword of length $n$ denoted as $x_{1,k_1}(1),\cdots,x_{1,k_1}(n)$. $S_1$ sends symbol $x_{1,k_1}$ along beamforming vector $\mathbf{v}_{1,k_1}$. Then the transmitted signal $\mathbf{X}_1$ is
\begin{eqnarray*}
\mathbf{X}_1=\sum_{k_1=1}^{M}\mathbf{v}_{1,k_1}x_{1,k_1}
\end{eqnarray*}
Similarly, at source node $S_2$, message $W_2$ is split into $M-1$ sub-messages. Sub-message $W_{2,k_2}$, $ k_2\in\{1,\ldots,M-1\}$, is encoded using a Gaussian codebook with rate equal to 1 DoF and codeword of length $n$ denoted as $x_{2,k_2}(1),\cdots, x_{2,k_2}(n)$. $S_2$ sends symbol $x_{2,k_2}$ along beamforming vector $\mathbf{v}_{2,k_2}$. Then the transmitted signal $\mathbf{X}_2$ is
\begin{eqnarray*}
\mathbf{X}_2=\sum_{k_2=1}^{M-1}\mathbf{v}_{2,k_2}x_{2,k_2}
\end{eqnarray*}

\begin{figure}[!t]
\centering
\includegraphics[width=5in]{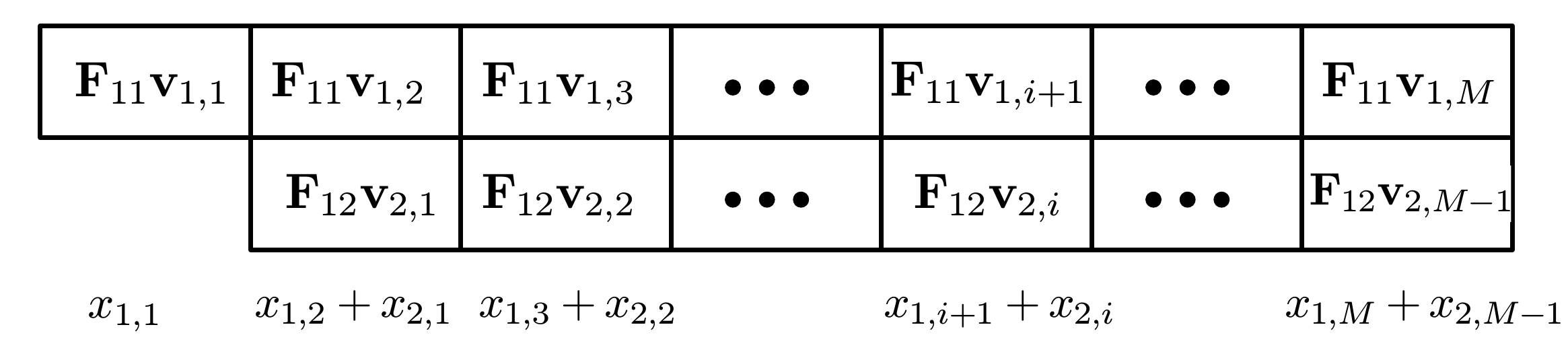}
\caption{Alignment and signals (ignoring noise) in each dimension at $R_1$}
\label{fig:alignr1}
\end{figure}

\begin{figure}[!t]
\centering
\includegraphics[width=5in]{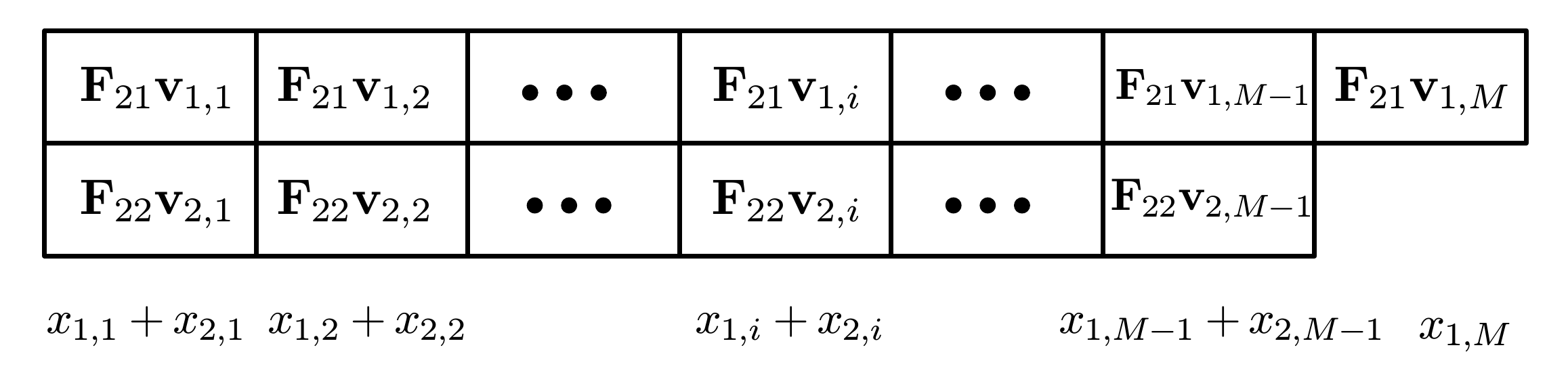}
\caption{Alignment and signals (ignoring noise) in each dimension at $R_2$}
\label{fig:alignr2}
\end{figure}

We will design beamforming vectors $\mathbf{v}_{1,k_1}$ and $\mathbf{v}_{2,k_2}$ such that they align at relays. As shown in Fig. \ref{fig:alignr1}, at relay $R_1$, $\mathbf{F}_{11}\mathbf{v}_{1,i+1}$ aligns with $\mathbf{F}_{12}\mathbf{v}_{2,i}$, $\forall i\in\{1,\ldots,M-1\}$, i.e.,
\begin{eqnarray}
\mathbf{F}_{11}\mathbf{v}_{1,i+1}&=&\mathbf{F}_{12}\mathbf{v}_{2,i}\\
\Rightarrow \mathbf{v}_{1,i+1}&=&\mathbf{F}_{11}^{-1}\mathbf{F}_{12}\mathbf{v}_{2,i}\label{eq:alignr1}.
\end{eqnarray}
At $R_2$, as shown in Fig. \ref{fig:alignr2}, $\mathbf{F}_{21}\mathbf{v}_{1,i}$ aligns with $\mathbf{F}_{22}\mathbf{v}_{2,i}$, $\forall i\in\{1,\ldots,M-1\}$, i.e.,
\begin{eqnarray}
\mathbf{F}_{21}\mathbf{v}_{1,i}&=&\mathbf{F}_{22}\mathbf{v}_{2,i}\\
\Rightarrow \mathbf{v}_{2,i}&=& \mathbf{F}_{22}^{-1}\mathbf{F}_{21}\mathbf{v}_{1,i}\label{eq:alignr2}.
\end{eqnarray}
From \eqref{eq:alignr1} and \eqref{eq:alignr2}, we can draw the dependence of all vectors as shown in Fig. \ref{fig:vectors}. From Fig. \ref{fig:vectors}, it follows that
\begin{eqnarray}
\mathbf{v}_{1,i+1}&=&\left(\mathbf{F}_{11}^{-1}\mathbf{F}_{12}\mathbf{F}_{22}^{-1}\mathbf{F}_{21}\right)^i\mathbf{v}_{1,1}\label{eq:v1}\\
\mathbf{v}_{2,i}&=& \left(\mathbf{F}_{22}^{-1}\mathbf{F}_{21}\mathbf{F}_{11}^{-1}\mathbf{F}_{12}\right)^{i-1}\mathbf{F}_{22}^{-1}\mathbf{F}_{21}\mathbf{v}_{1,1}\label{eq:v2}.
\end{eqnarray}
Note that once $\mathbf{v}_{1,1}$ is determined, then all other vectors can be calculated through \eqref{eq:v1} and \eqref{eq:v2}. We choose $\mathbf{v}_{1,1}$ as an $M\times 1$ vector with all elements equal to one, i.e., $\mathbf{v}_{1,1}=[1~1~\cdots1]^T$. Let $\mathbf{A}=\mathbf{F}_{11}^{-1}\mathbf{F}_{12}\mathbf{F}_{22}^{-1}\mathbf{F}_{21}$. Then $\mathbf{A}$ is a diagonal channel with the $m$th diagonal entry denoted as $A_m$.

\begin{figure}[!t]
\centering
\includegraphics[width=6in]{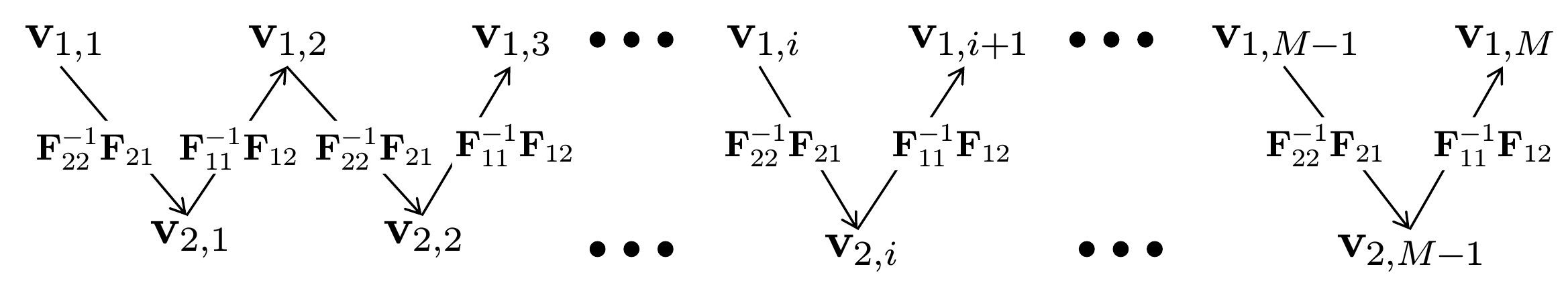}
\caption{Dependence of vectors for the first hop}
\label{fig:vectors}
\end{figure}

Next, we prove that $\mathbf{v}_{1,1}, \mathbf{v}_{1,2}, \ldots, \mathbf{v}_{1,M}$ are linearly independent. From \eqref{eq:v1} and $\mathbf{v}_{1,1}=[1~\cdots~1]^T$, we have
\begin{eqnarray}
\mathbf{v}_{1,i+1}=\mathbf{A}^i \left[\begin{array}{c} 1 \\ \vdots \\ 1 \end{array}\right]=\left[\begin{array}{c}A_{1}^i\\ \vdots \\ A_{M}^i \end{array}\right]~~~i\in\{1,\cdots, M-1\}.
\end{eqnarray}
Let $\mathbf{B}=\left[\mathbf{v}_{1,1}, \mathbf{v}_{1,2}, \ldots, \mathbf{v}_{1,M}\right]$. Then
\begin{eqnarray}
\mathbf{B}=\left[\begin{array}{cccc}1&A_{1}&\cdots& A_{1}^{M-1}\\ \vdots & \vdots & \ddots & \vdots \\ 1& A_{M}&\cdots& A_{M}^{M-1}\end{array}\right].
\end{eqnarray}
Notice that $\mathbf{B}$ is a Vandermonde matrix and its determinant is
\begin{eqnarray}
\det(\mathbf{B})=\prod_{1\leq i<j \leq M}(A_{j}-A_{i}).
\end{eqnarray}
Since all channel coefficients are time-varying and drawn i.i.d. according to a continuous distribution, $A_m$ are all distinct almost surely. Therefore, $\det(\mathbf{B})$ is not equal to zero almost surely, which establishes the linear independence of $\mathbf{v}_{1,1}, \mathbf{v}_{1,2}, \ldots, \mathbf{v}_{1,M}$. Similarly, $\mathbf{v}_{2,1},\cdots,\mathbf{v}_{2,M-1}$ are linearly independent.

\subsubsection*{Relays:}
Let us first consider the received signals at relays. After alignment, at $R_1$, the signal (ignoring noise) in the first dimension which is along $\mathbf{F}_{11}\mathbf{v}_{1,1}$ is $x_{1,1}$ and in the $(i+1)$th dimension which is along $\mathbf{F}_{11}\mathbf{v}_{1,i+1}$, $i\in\{1,\cdots, M-1\}$, the signal (ignoring noise) is $x_{1,i+1}+x_{2,i}$. This is shown in Fig. \ref{fig:alignr1}. Specifically, the received signal at $R_1$ is
\begin{eqnarray}
\mathbf{Y}_{R_1}&=&\mathbf{F}_{11}\mathbf{X}_{1}+\mathbf{F}_{12}\mathbf{X}_{2}+\mathbf{Z}_1\notag\\
&=&\mathbf{F}_{11}\sum_{k_1=1}^{M}\mathbf{v}_{1,k_1}x_{1,k_1}+\mathbf{F}_{12}\sum_{k_2=1}^{M-1}\mathbf{v}_{2,k_2}x_{2,k_2}+\mathbf{Z}_1\notag\\
&\stackrel{(a)}{=}&\mathbf{F}_{11}\mathbf{v}_{1,1}x_{1,1}+\sum_{i=1}^{M-1}\mathbf{F}_{11}\mathbf{v}_{1,i+1}(x_{1,i+1}+x_{2,i})+\mathbf{Z}_1
\end{eqnarray}
where $(a)$ uses the alignment condition \eqref{eq:alignr1}.

The relay will amplify and forward its received signal by multiplying a matrix. This is done with two stages. The relay first isolates signals in each dimension by multiplying the inverse of the effective $M\times M$ channel matrix. Mathematically,
\begin{eqnarray}\label{eq:invertr1}
\left[\begin{array}{c}x_{R_1,1}\\ x_{R_1,2}\\ \vdots \\ x_{R_1,i+1}\\ \vdots\\ x_{R_1,M}\end{array}\right]=\mathbf{F}_{R_1}^{-1}\mathbf{Y}_{R_1}
=\left[\begin{array}{c}x_{1,1}\\ x_{1,2}+x_{2,1}\\ \vdots\\ x_{1,i+1}+x_{2,i} \\ \vdots \\ x_{1,M}+x_{2,M-1}\end{array}\right]+\mathbf{F}_{R_1}^{-1}\mathbf{Z}_1~~~~~~i\in\{1,\cdots, M-1\}
\end{eqnarray}
where $\mathbf{F}_{R_1}=[\mathbf{F}_{11}\mathbf{v}_{1,1}~\mathbf{F}_{11}\mathbf{v}_{1,2}~\cdots~\mathbf{F}_{11}\mathbf{v}_{1,M}]$. Then $R_1$ sends $x_{R_1,k_1}$ along beamforming vector $\mathbf{v}_{R_1,k_1}$, $k_1 \in\{1,\cdots, M\}$, i.e.,
\begin{eqnarray*}
\mathbf{X}_{R_1}=\sum_{k_1=1}^{M}\mathbf{v}_{R_1,k_1}x_{R_1,k_1}.
\end{eqnarray*}
Similarly, as shown in Fig. \ref{fig:alignr2}, at relay $R_2$, the signal (ignoring noise) in the $i$th dimension which is along $\mathbf{F}_{21}\mathbf{v}_{1,i}$ is $x_{1,i}+x_{2,i}$, $\forall i\in\{1,\cdots,M-1\}$, and in the $M$th dimension which is along $\mathbf{F}_{21}\mathbf{v}_{1,M}$, the signal is $x_{1,M}$. Mathematically, the received signal at $R_2$ is
\begin{eqnarray}
\mathbf{Y}_{R_2}&=&\mathbf{F}_{21}\mathbf{X}_{1}+\mathbf{F}_{22}\mathbf{X}_{2}+\mathbf{Z}_2\notag\\
&\stackrel{(a)}{=}&\sum_{i=1}^{M-1}\mathbf{F}_{21}\mathbf{v}_{1,i}(x_{1,i}+x_{2,i})+\mathbf{F}_{21}\mathbf{v}_{1,M}x_{1,M}+\mathbf{Z}_2
\end{eqnarray}
where $(a)$ uses the alignment condition \eqref{eq:alignr2}. Then after inverting the effective channel matrix, the signals in each dimension are
\begin{eqnarray}\label{eq:invertr2}
\left[\begin{array}{c}x_{R_2,1}\\ \vdots \\ x_{R_2,i} \\ \vdots \\ x_{R_2,M-1} \\ x_{R_2,M}\end{array}\right]=\mathbf{F}_{R_2}^{-1}\mathbf{Y}_{R_2}
=\left[\begin{array}{c}x_{1,1}+x_{2,1}\\ \vdots\\ x_{1,i}+x_{2,i} \\ \vdots \\ x_{1,M-1}+x_{2,M-1} \\ x_{1,M}\end{array}\right]+\mathbf{F}_{R_2}^{-1}\mathbf{Z}_2 ~~~~~~~~~~ i \in\{1,\cdots, M-1\}
\end{eqnarray}
where $\mathbf{F}_{R_2}=[\mathbf{F}_{21}\mathbf{v}_{1,1}~\mathbf{F}_{21}\mathbf{v}_{1,2}~\cdots~\mathbf{F}_{21}\mathbf{v}_{1,M}]$. Then $R_2$ sends $x_{R_2,k_2}$ along beamforming vector $\mathbf{v}_{R_2,k_2}$, $k_2 \in\{1,\cdots, M-1\}$, i.e.,
\begin{eqnarray*}
\mathbf{X}_{R_2}=\sum_{k_2=1}^{M-1}\mathbf{v}_{R_2,k_2}x_{R_2,k_2}.
\end{eqnarray*}

\begin{figure}[!t]
\centering
\includegraphics[width=6in]{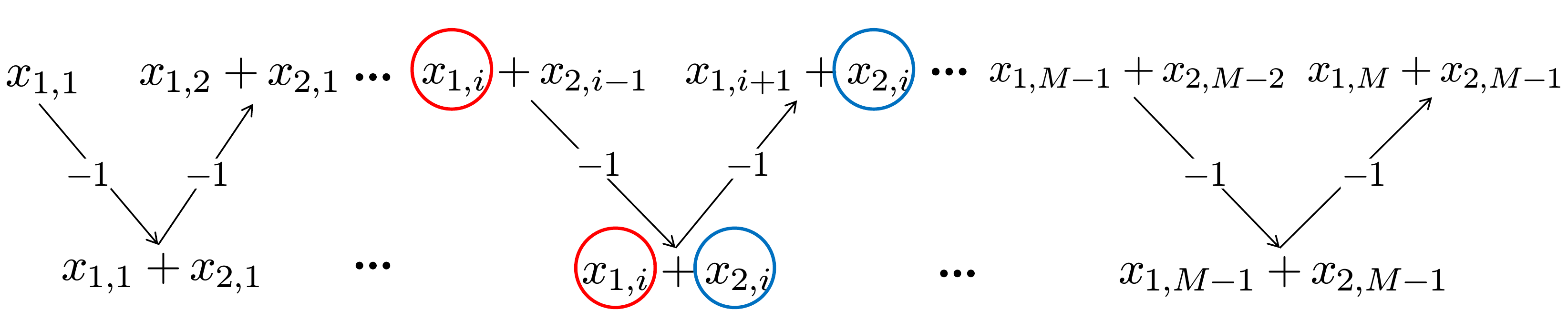}
\caption{Interference neutralization in the second hop}
\label{fig:alignedin}
\end{figure}

\begin{figure}[!t]
\centering
\includegraphics[width=5in]{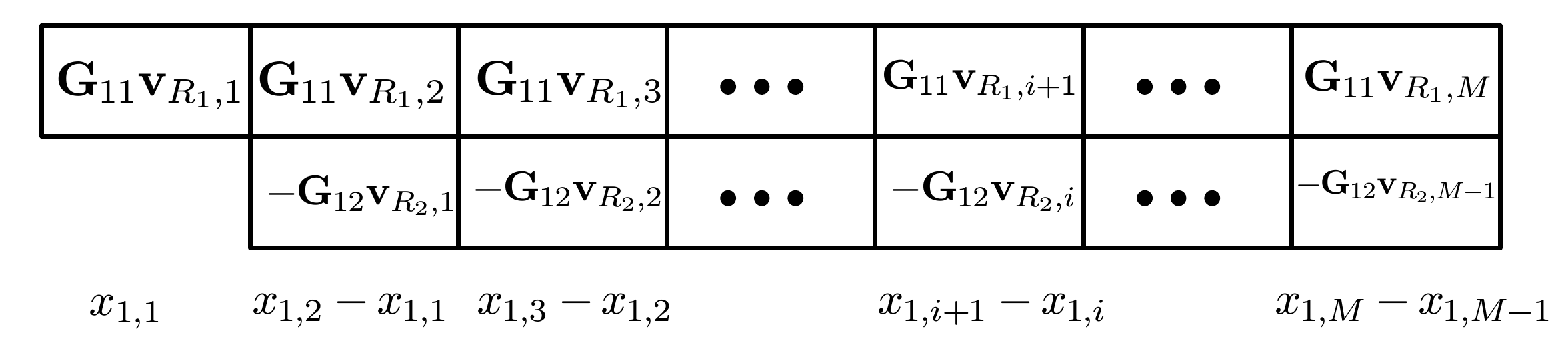}
\caption{Aligned interference neutralization at $D_1$}
\label{fig:alignd1}
\end{figure}

\begin{figure}[!t]
\centering
\includegraphics[width=5in]{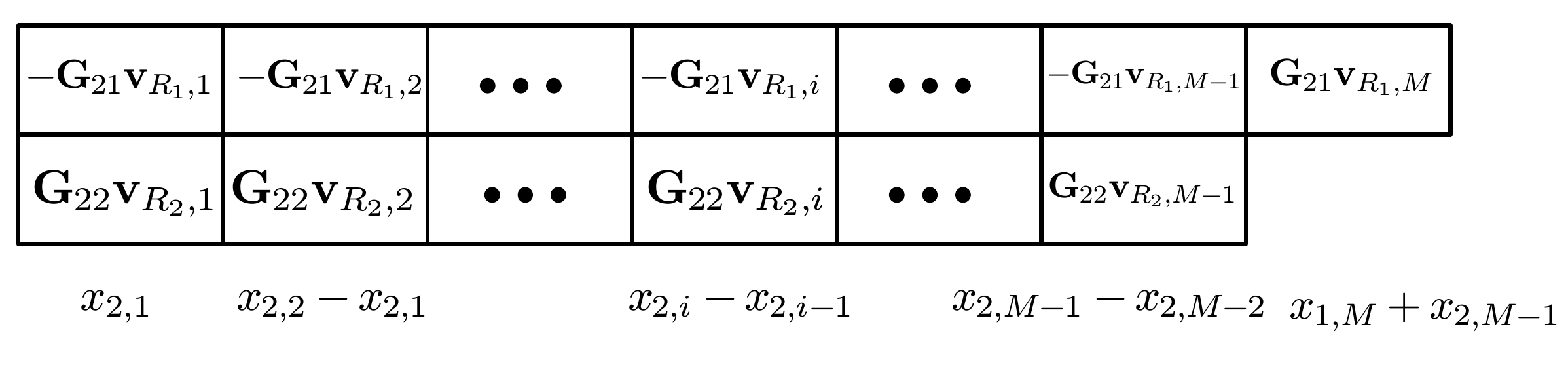}
\caption{Aligned interference neutralization at $D_2$}
\label{fig:alignd2}
\end{figure}
We will design $\mathbf{v}_{R_1,k_1}$ and $\mathbf{v}_{R_2,k_2}$ such that all interference can be cancelled over the air at the destinations. To understand how to neutralize all interference, let us focus on the symbols transmitted through two relays. As shown in Fig. \ref{fig:alignedin}, the symbol that occupies the $i+1$th dimension at $R_1$ is $x_{1,i+1}+x_{2,i}$ and the symbols that occupy the $i$th dimension at $R_2$ is $x_{1,i}+x_{2,i}$. Notice if these two signals are received along the same dimension but with complementary signs at $D_1$, then interference $x_{2,i}$ can be neutralized. This can be done by choosing following alignment condition as shown in Fig. \ref{fig:alignd1}:
\begin{eqnarray}
\mathbf{G}_{11}\mathbf{v}_{R_1,i+1}&=&-\mathbf{G}_{12}\mathbf{v}_{R_2,i}\\
\Rightarrow \mathbf{v}_{R_1,i+1}&=&-\mathbf{G}_{11}^{-1}\mathbf{G}_{12}\mathbf{v}_{R_2,i}~~~~~~~~~~\forall i\in\{1,\ldots,M-1\}.\label{eq:alignd1}
\end{eqnarray}
Similarly, as shown in Fig. \ref{fig:alignedin}, the symbols that occupy the $i$th dimension at $R_1$ and $R_2$ are $x_{1,i}+x_{2,i-1}$ and $x_{1,i}+x_{2,i}$. If these two symbols are received along the same dimension but with complementary signs at $D_2$, then interference $x_{1,i}$ can be neutralized. This can be done by choosing following alignment conditions as shown in Fig. \ref{fig:alignd2}
\begin{eqnarray}
-\mathbf{G}_{21}\mathbf{v}_{R_1,i}&=&\mathbf{G}_{22}\mathbf{v}_{R_2,i}\\
\Rightarrow \mathbf{v}_{R_2,i}&=&-\mathbf{G}_{22}^{-1}\mathbf{G}_{21}\mathbf{v}_{R_1,i}~~~~~~~~~~\forall i\in\{1,\ldots,M-1\}\label{eq:alignd2}.
\end{eqnarray}
The dependence of all vectors is shown in Fig. \ref{fig:vectorsh2}. From \eqref{eq:alignd1} and \eqref{eq:alignd2}, it follows that
\begin{eqnarray}
\mathbf{v}_{R_1,i+1}&=&\left(\mathbf{G}_{11}^{-1}\mathbf{G}_{12}\mathbf{G}_{22}^{-1}\mathbf{G}_{21}\right)^i\mathbf{v}_{R_1,1}\label{eq:vr1}\\
\mathbf{v}_{R_2,i}&=& -\left(\mathbf{G}_{22}^{-1}\mathbf{G}_{21}\mathbf{G}_{11}^{-1}\mathbf{G}_{12}\right)^{i-1}\mathbf{G}_{22}^{-1}\mathbf{G}_{21}\mathbf{v}_{R_1,1}\label{eq:vr2}
\end{eqnarray}
Note that once $\mathbf{v}_{R_1,1}$ is determined, then all other vectors can be calculated through \eqref{eq:vr1} and \eqref{eq:vr2}. Again, we choose $\mathbf{v}_{R_1,1}=[1~\cdots~1]^T$.
\begin{figure}[!t]
\centering
\includegraphics[width=6in]{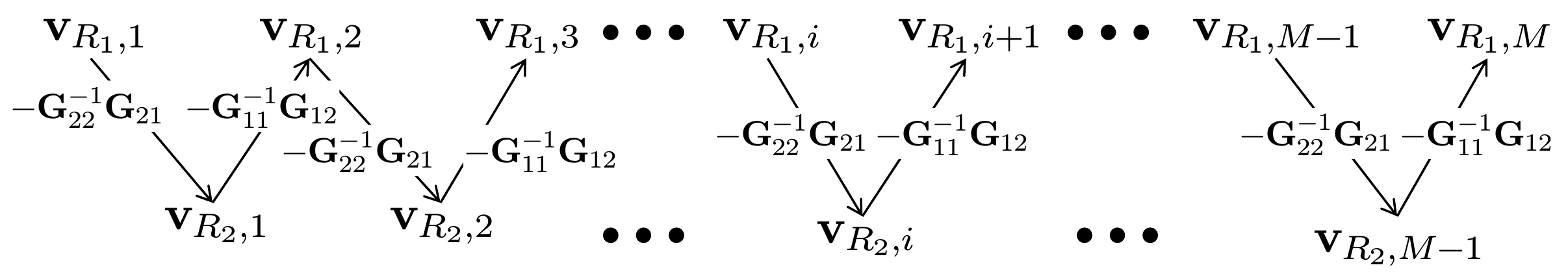}
\caption{Dependence of vectors for the second hop}
\label{fig:vectorsh2}
\end{figure}
Note that \eqref{eq:v1} and \eqref{eq:vr1} are in the same form. Therefore, we can prove that  $\mathbf{v}_{R_1,1},\cdots,\mathbf{v}_{R_1,M}$ are linearly independent. Similarly, $\mathbf{v}_{R_2,1},\cdots,\mathbf{v}_{R_2,M-1}$ are linearly independent.

\subsubsection*{Destinations:}
After aligned interference neutralization, each destination can decode its desired signals. At $D_1$, the received signal is
\begin{eqnarray}
\mathbf{Y}_1&=&\mathbf{G}_{11}\mathbf{X}_{R_1}+\mathbf{G}_{12}\mathbf{X}_{R_2}+\mathbf{N}_1\notag\\
&=&\mathbf{G}_{11}\sum_{k_1=1}^{M}\mathbf{v}_{R_1,k_1}x_{R_1,k_1}+\mathbf{G}_{12}\sum_{k_2=1}^{M-1}\mathbf{v}_{R_2,k_2}x_{R_2,k_2}+\mathbf{N}_1\notag\\
&\stackrel{(a)}{=}&\mathbf{G}_{11}\mathbf{v}_{R_1,1}(x_{1,1}+z'_{1,1})+\sum_{i=1}^{M-1}\mathbf{G}_{11}\mathbf{v}_{R_1,i+1}(x_{1,i+1}-x_{1,i}+z'_{1,i+1}-z'_{2,i})+\mathbf{N}_1
\end{eqnarray}
where $(a)$ uses the alignment condition \eqref{eq:alignd1} and $z'_{1,k_1}$ and $z'_{2,k_2}$ are the $k_1$th and $k_2$th element of the noise vectors $\mathbf{F}_{R_1}^{-1}\mathbf{Z}_1$ and $\mathbf{F}_{R_2}^{-1}\mathbf{Z}_2$ in \eqref{eq:invertr1} and \eqref{eq:invertr2}, respectively. Therefore, as shown in Fig. \ref{fig:alignd1}, the signal (ignoring noise) in the first dimension which is along $\mathbf{G}_{11}\mathbf{v}_{R_1,1}$ is $x_{1,1}$ and in the $(i+1)$th dimension which is along $\mathbf{G}_{11}\mathbf{v}_{R_1,i+1}$, $i\in\{1,\cdots, M-1\}$, the signal is $x_{1,i+1}-x_{1,i}$.  Then receiver $D_1$ can first decode $x_{1,1}$ and subtract it from the second dimension to decode $x_{1,2}$ and so on to decode all desired symbols in a chain.

Similarly, the received signal at $D_2$ is
\begin{eqnarray}
\mathbf{Y}_2&=&\mathbf{G}_{21}\mathbf{X}_{R_1}+\mathbf{G}_{22}\mathbf{X}_{R_2}+\mathbf{N}_2\notag\\
&=&\mathbf{G}_{21}\sum_{k_1=1}^{M}\mathbf{v}_{R_1,k_1}x_{R_1,k_1}+\mathbf{G}_{22}\sum_{k_2=1}^{M-1}\mathbf{v}_{R_2,k_2}x_{R_2,k_2}+\mathbf{N}_2\notag\\
&\stackrel{(a)}{=}&\mathbf{G}_{21}\mathbf{v}_{R_1,M}(x_{1,M}+x_{2,M-1}+z'_{1,M})+\sum_{i=1}^{M-1}\mathbf{G}_{22}\mathbf{v}_{R_2,i}(x_{2,i}-x_{2,i-1}-z'_{1,i}+z'_{2,i})+\mathbf{N}_2
\end{eqnarray}
where $(a)$ uses the alignment condition \eqref{eq:alignd2} and $x_{2,0}=0$. Therefore, as shown in Fig. \ref{fig:alignd2}, the signal (ignoring noise) in the $i$th dimension along $\mathbf{G}_{22}\mathbf{v}_{R_2,i}$ is $x_{2,i}-x_{2,i-1}$, $\forall i\in\{1,\cdots, M-1\}$ and the signal (ignoring noise) in the $M$th dimension along $\mathbf{G}_{21}\mathbf{v}_{R_1,M}$ is $x_{1,M}+x_{2,M-1}$. Then  $D_2$ can first decode $x_{2,1}$ and subtract it from the second dimension to decode $x_{2,2}$ and so on to decode all desired symbols in a chain.

\subsection{Constant channel coefficients - rational dimension framework}
In this section, we will restrict the channels to constant values. The achievable scheme proposed for the time-varying case cannot be applied directly to the constant case. This is because after symbol extension, $\mathbf{F}$ and $\mathbf{G}$ are essentially scaled identity matrices if the channels are constant. From \eqref{eq:v1}, it can be easily seen that $\mathbf{v}_{1,i+1}$, $i\in\{1,\cdots,M-1\}$, are scaling of $\mathbf{v}_{1,1}$ and thus they are linearly dependent. In this case, we will use the framework of rational dimensions introduced in \cite{Motahari_Gharan_Maddah_Khandani} for $K$ user interference channel with constant channel coefficients. With the rational dimension framework, the aligned interference neutralization scheme proposed in the last section can be carried over to the real constant channel where all signals, channel coefficients and noises are real values. In this case, the DoF is defined as $d=\lim_{P\rightarrow \infty}\frac{C_{\Sigma}(P)}{\frac{1}{2}\log P}$. The result can also be generalized to the complex case using Theorem 7 in \cite{Maddah_Compound}.

\subsubsection*{Sources:}
 At source node $S_1$, message $W_1$ is split into $M$ sub-messages. Sub-message $W_{1,k_1}$, $k_1\in\{1,\ldots,M\}$, is encoded using a codebook with the codeword of length $n$ denoted as $x_{1,k_1}(1),\cdots,x_{1,k_1}(n)$. For any
$\epsilon>0$ and a constant $\gamma$, let $\mathcal{C}$ denote all integers in the interval $\left[-\gamma P^{\frac{1-\epsilon}{2(M+\epsilon)}}, \gamma P^{\frac{1-\epsilon}{2(M+\epsilon)}}\right]$, i.e.,
\begin{eqnarray}
\mathcal{C}=\left\{x: x\in\mathbb{Z}\cap \left[-\gamma P^{\frac{1-\epsilon}{2(M+\epsilon)}}, \gamma P^{\frac{1-\epsilon}{2(M+\epsilon)}}\right]\right\}.
\end{eqnarray}
$x_{1,k_1}$ is obtained by uniform i.i.d. sampling on $\mathcal{C}$.  Essentially, each sub-message carries $\frac{1-\epsilon}{M+\epsilon}$ DoF. $S_1$ sends the linear combination of $x_{1,1}, \cdots, x_{1,M}$ with real coefficients $v_{1,1}, \cdots, v_{1,M}$. Then the transmitted signal $X_1$ is
\begin{eqnarray*}
X_1=A\sum_{k_1=1}^{M}v_{1,k_1}x_{1,k_1}
\end{eqnarray*}
where $A$ is a normalizing constant chosen to satisfy the power constraint.

Similarly, at $S_2$, message $W_2$ is split into $M-1$ sub-messages. Sub-message $W_{2,k_2}$, $ k_2\in\{1,\ldots,M-1\}$, is encoded using a codebook with codeword of length $n$ denoted as $x_{2,k_2}(1),\cdots,x_{2,k_2}(n)$ where each symbol is obtained by uniform i.i.d. sampling on $\mathcal{C}$.  Then the transmitted signal $X_2$ is
\begin{eqnarray*}
X_2=A\sum_{k_2=1}^{M-1}v_{2,k_2}x_{2,k_2}
\end{eqnarray*}
The power constraints at both source nodes are
\begin{eqnarray}
E\left[X_1^2\right]&=&A^2\sum_{k_1=1}^Mv^2_{1,k_1}E\left[x_{1,k_1}^2\right]\leq \gamma^2A^2\underbrace{\sum_{k_1=1}^Mv^2_{1,k_1}}_{\xi_1^2} P^{\frac{1-\epsilon}{(M+\epsilon)}}\leq P\\
E\left[X_2^2\right]&=&A^2\sum_{k_2=1}^{M-1}v^2_{2,k_2}E\left[x_{2,k_2}^2\right]\leq \gamma^2A^2\underbrace{\sum_{k_2=1}^{M-1}v^2_{2,k_2}}_{\xi_2^2} P^{\frac{1-\epsilon}{(M+\epsilon)}}\leq P.
\end{eqnarray}
To satisfy  power constraints at both transmitters, we choose
\begin{eqnarray}
A=\frac{\xi}{\gamma}P^{\frac{M-1+2\epsilon}{2(M+\epsilon)}}
\end{eqnarray}
where $\xi=\min(\frac{1}{\xi_1},\frac{1}{\xi_2})$.

Similar to the linear scheme, we choose the following alignment conditions at $R_1$
\begin{eqnarray}\label{eq:alignedr1rational}
F_{11}v_{1,i+1}&=&F_{12}v_{2,i}.
\end{eqnarray}
At $R_2$, we choose the following alignment:
\begin{eqnarray}\label{eq:alignedr2rational}
F_{21}v_{1,i}&=&F_{22}v_{2,i}.
\end{eqnarray}
Then we have
\begin{eqnarray}
v_{1,i+1}&=&\left(F_{11}^{-1}F_{12}F_{22}^{-1}F_{21}\right)^iv_{1,1}\\
v_{2,i}&=& \left(F_{22}^{-1}F_{21}F_{11}^{-1}F_{12}\right)^{i-1}F_{22}^{-1}F_{21}v_{1,1}
\end{eqnarray}
Again, once $v_{1,1}$ is determined, then all other scaling factors can be calculated through above equations. We choose $v_{1,1}=(F_{11}F_{22})^{M-1}$. Thus, we have
\begin{eqnarray}
v_{1,i+1}&=&(F_{12}F_{21})^{i}(F_{11}F_{22})^{M-i-1}\\
v_{2,i}&=&F_{11}^{M-i}F_{12}^{i-1}F_{21}^iF_{22}^{M-1-i}~~~i\in\{1,\cdots, M-1\}.
\end{eqnarray}

\subsubsection*{Relays:}
Instead of amplifying and forward the received signal at the relays as in the linear scheme, the relays will make hard decisions on the signals received in each rational dimension and then forward them. Therefore, unlike the linear scheme, the noise will not be built up at relays. The received signal at relay $R_1$ is
\begin{eqnarray}
Y_{R_1}&=&F_{11}X_{1}+F_{12}X_{2}+Z_1\notag\\
&=&AF_{11}\sum_{k_1=1}^{M}v_{1,k_1}x_{1,k_1}+AF_{12}\sum_{k_2=1}^{M-1}v_{2,k_2}x_{2,k_2}+Z_1\notag\\
&\stackrel{(a)}{=}&AF_{11}v_{1,1}x_{1,1}+\sum_{i=1}^{M-1}AF_{11}v_{1,i+1}\underbrace{(x_{1,i+1}+x_{2,i})}_{x_{R_1,i+1}}+Z_1
\end{eqnarray}
where $(a)$ uses the alignment condition \eqref{eq:alignedr1rational}. Note that $x_{R_1,i+1}$ are sum of two symbols, which is also an integer but in the interval $\left[-2\gamma P^{\frac{1-\epsilon}{2(M+\epsilon)}}, 2\gamma P^{\frac{1-\epsilon}{2(M+\epsilon)}}\right]$. Let $x_{R_1,1}=x_{1,1}$. Therefore, the received signal is a noisy observation of a point from the following constellation:
\begin{eqnarray}
\mathcal{C}_{R_1}=\left\{A\left(F_{11}v_{1,1}x_{R_1,1}+\cdots+F_{11}v_{1,M}x_{R_1,M}\right)\right\}
\end{eqnarray}
Notice that $v_{1,1},\cdots,v_{1,M}$ are distinct monomial functions of channel coefficients and thus rationally independent almost surely. Thus, there is a one-to-one mapping from $\mathcal{C}_{R_1}$ to $x_{R_1,k_1}$, $k_1\in\{1,\cdots,M\}$. Relay $R_1$ will find the point in $\mathcal{C}_{R_1}$ which has the minimal distance between $Y_{R_1}$, and map the point to $\hat{x}_{R_1,k_1}$ to make a hard decision on $x_{R_1,k_1}$. From \cite{Motahari_Gharan_Maddah_Khandani}, it can be shown that the minimum distance between two points in $\mathcal{C}_{R_1}$ increases with $P$ almost surely. Therefore, it can be shown that the error probability of estimating $x_{R_1,k_1}$, $\text{Pr}(\hat{x}_{R_1,k_1}\neq x_{R_1,k_1})$, will go to zero as the power $P$ goes to infinity. Then the transmitted signal at $R_1$ is
\begin{eqnarray*}
X_{R_1}=B\left(\sum_{k_1=1}^{M}v_{R_1,k_1}\hat{x}_{R_1,k_1}\right)
\end{eqnarray*}
where $B$ is a normalizing constant to satisfy the power constraint.

Similarly, at $R_2$, the received signal is
\begin{eqnarray}
Y_{R_2}&=&F_{21}X_{1}+F_{22}X_{2}+Z_2\notag\\
&=&AF_{21}\sum_{k_1=1}^{M}v_{1,k_1}x_{1,k_1}+AF_{22}\sum_{k_2=1}^{M-1}v_{2,k_2}x_{2,k_2}+Z_2\notag\\
&\stackrel{(a)}{=}&AF_{21}v_{1,M}x_{1,M}+A\sum_{i=1}^{M-1}F_{21}v_{1,i}\underbrace{(x_{1,i}+x_{2,i})}_{x_{R_2,i}}+Z_2
\end{eqnarray}
where $(a)$ uses the alignment condition \eqref{eq:alignedr2rational}. Similarly, relay $R_2$ will make a hard decision on $x_{R_2,i}$. Note that $x_{R_2,i}$ are integers in the interval $\left[-2\gamma P^{\frac{1-\epsilon}{2(M+\epsilon)}}, 2\gamma P^{\frac{1-\epsilon}{2(M+\epsilon)}}\right]$. From \cite{Motahari_Gharan_Maddah_Khandani}, it can be shown that the error probability of estimating $x_{R_2,i}$, $\text{Pr}(\hat{x}_{R_2,i}\neq x_{R_2,i})$, will go to zero as the power $P$ goes to infinity.
Then the transmitted signal at $R_2$ is
\begin{eqnarray*}
X_{R_2}=B\sum_{k_2=1}^{M-1}v_{R_2,k_2}\hat{x}_{R_2,k_2}
\end{eqnarray*}
Now consider the power constraints at two relays. At $R_1$
\begin{eqnarray}
E\left[X_{R_1}^2\right]&=&B^2\sum_{m,n=1}^{M} v_{R_1,m}v_{R_1,n}E(\hat{x}_{R_1,m}\hat{x}_{R_1,n})\\
&\leq& 4B^2\gamma^2P^{\frac{1-\epsilon}{(M+\epsilon)}}\underbrace{\sum_{m,n=1}^{M} v_{R_1,m}v_{R_1,n}}_{\xi_1^2}\leq P
\end{eqnarray}
where we use the fact that $\hat{x}_{R_1,k_1}\leq 2\gamma P^{\frac{1-\epsilon}{2(M+\epsilon)}}$.
Similarly, the power constraint at $R_2$
\begin{eqnarray}
E\left[X_{R_2}^2\right]&=&B^2\sum_{m,n=1}^{M-1} v_{R_2,m}v_{R_2,n}E(\hat{x}_{R_2,m}\hat{x}_{R_2,n})\\
&\leq& 4B^2\gamma^2P^{\frac{1-\epsilon}{(M+\epsilon)}}\underbrace{\sum_{m,n=1}^{M-1} v_{R_2,m}v_{R_2,n}}_{\xi_2^2}\leq P
\end{eqnarray}
To satisfy power constraints, we choose
\begin{eqnarray}
B=\frac{\xi}{2\gamma}P^{\frac{M-1+2\epsilon}{2(M+\epsilon)}}
\end{eqnarray}
where $\xi=\min(\frac{1}{\xi_1},\frac{1}{\xi_2})$.

To cancel interference at destinations, similar to the linear scheme we choose the following alignment
\begin{eqnarray}
G_{11}v_{R_1,i+1}&=&-G_{12}v_{R_2,i}
\label{eq:alignd1rational}\\
-G_{21}v_{R_1,i}&=&G_{22}v_{R_2,i}
~~~~~~~~~\forall i\in\{1,\ldots,M-1\}\label{eq:alignd2rational}.
\end{eqnarray}
From \eqref{eq:alignd1rational} and \eqref{eq:alignd2rational}, it can be easily obtained that
\begin{eqnarray}
v_{R_1,i+1}&=&\left(G_{11}^{-1}G_{12}G_{22}^{-1}G_{21}\right)^iv_{R_1,1}\\
v_{R_2,i}&=& -\left(G_{22}^{-1}G_{21}G_{11}^{-1}G_{12}\right)^{i-1}G_{22}^{-1}G_{21}v_{R_1,1}
\end{eqnarray}
Again, once $v_{R_1,1}$ is determined, then all other scaling factors can be calculated using above equations. We choose $v_{R_1,1}=(G_{11}G_{22})^{M-1}$. Thus, we have
\begin{eqnarray}
v_{R_1,i+1}&=&(G_{12}G_{21})^{i}(G_{11}G_{22})^{M-i-1}\\
v_{R_2,i}&=&-G_{11}^{M-i}G_{12}^{i-1}G_{21}^iG_{22}^{M-1-i}~~~i\in\{1,\cdots, M-1\}
\end{eqnarray}

\subsubsection*{Destinations:}
After aligned interference cancellation, each destination can decode its desired signals. The received signal at $D_1$ is
\begin{eqnarray}
Y_1&=&G_{11}X_{R_1}+G_{12}X_{R_2}+N_1\notag\\
&=&BG_{11}\sum_{k_1=1}^{M}v_{R_1,k_1}\hat{x}_{R_1,k_1}+BG_{12}\sum_{k_2=1}^{M-1}v_{R_2,k_2}\hat{x}_{R_2,k_2}+N_1\notag\\
&\stackrel{(a)}{=}&BG_{11}v_{R_1,1}\hat{x}_{R_1,1}+B\sum_{i=1}^{M-1}G_{11}v_{R_1,i+1}\underbrace{(\hat{x}_{R_1,i+1}-\hat{x}_{R_2,i})}_{x_{D_1,i+1}}+N_1
\end{eqnarray}
where $(a)$ uses the alignment condition \eqref{eq:alignd1rational}. Let $x_{D_1,1}=\hat{x}_{R_1,1}$. Again, since $v_{R_1,1},\cdots, v_{R_1,M}$ are distinct monomial functions of channel coefficients, from \cite{Motahari_Gharan_Maddah_Khandani}, it can be shown that  $D_1$ can estimate $x_{D_1,k_1}$ with error probability $\text{Pr}\{\hat{x}_{D_1,k_1}\neq x_{D_1,k_1}\}$ going to zero as $P\rightarrow \infty$. After estimating $x_{D_1,k_1}$, $D_1$ estimates $x_{1,k_1}$ using the following estimator
\begin{eqnarray}
\hat{x}_{1,k_1}=\sum_{m=1}^{k_1}\hat{x}_{D_1,m} ~~~~~~~~\forall k_1\in\{1,\cdots, M\}
\end{eqnarray}
Then the message $W_{1,k_1}$ is decoded using a jointly typical decoder using a block of $\hat{x}_{1,k_1}$. Then $W_{1,k_1}$ can achieve a rate:
\begin{eqnarray}
I(x_{1,k_1};\hat{x}_{1,k_1})&=&H(x_{1,k_1})-H(x_{1,k_1}|\hat{x}_{1,k_1})\\
&\stackrel{(a)}{\ge}& \log|\mathcal{C}|-1-\text{Pr}(\hat{x}_{1,k_1}\neq x_{1,k_1})\log|\mathcal{C}|\\
&=&(1-\text{Pr}(\hat{x}_{1,k_1}\neq x_{1,k_1}))\log|\mathcal{C}|-1
\end{eqnarray}
where $(a)$ uses Fano's inequality.  Note that $\text{Pr}(\hat{x}_{1,k_1}\neq x_{1,k_1})$ will go to zero as  $P\rightarrow \infty$,  since  $\text{Pr}(\hat{x}_{R_1,k_1}\neq x_{R_1,k_1})$, $\text{Pr}(\hat{x}_{R_2,k_2}\neq x_{R_2,k_2})$ and $\text{Pr}\{\hat{x}_{D_1,k_1}\neq x_{D_1,k_1}\}$ go to zero as $P\rightarrow \infty$. And $\log|\mathcal{C}|=\frac{1-\epsilon}{2(M+\epsilon)}\log P + o(\log P)$. Therefore, as $P\rightarrow \infty$, $W_{1,k_1}$ achieves a rate equal to $\frac{1-\epsilon}{2(M+\epsilon)}\log P + o(\log P)$ and thus $\frac{1-\epsilon}{M+\epsilon}$ DoF.

Similarly, the received signal at $D_2$ is
\begin{eqnarray}
Y_2&=&G_{21}X_{R_1}+G_{22}X_{R_2}+N_2\notag\\
&=&BG_{21}\sum_{k_1=1}^{M}v_{R_1,k_1}\hat{x}_{R_1,k_1}+BG_{22}\sum_{k_2=1}^{M-1}v_{R_2,k_2}\hat{x}_{R_2,k_2}+N_2\notag\\
&\stackrel{(a)}{=}&BG_{21}v_{R_1,M}\hat{x}_{R_1,M}+B\sum_{i=1}^{M-1}G_{22}v_{R_2,i}(\hat{x}_{R_2,i}-\hat{x}_{R_1,i})+N_2
\end{eqnarray}
where $(a)$ uses the alignment condition \eqref{eq:alignd2rational}. Similar to $D_1$, $D_2$ can decode $W_{2,k_2}$, $k_2\in\{1,\cdots,M-1\}$, each carrying $\frac{1-\epsilon}{M+\epsilon}$ DoF. Therefore, the total number of DoF is $(2M-1)\frac{1-\epsilon}{M+\epsilon}$. Since $M$ can be made arbitrarily large and $\epsilon$ can be made arbitrarily small, we can achieve arbitrarily close to 2 DoF.

\subsection{Constant channel coefficients with linear scheme}
In last section, we show that 2 DoF can be achieved almost surely when the channels are held constant. The achievable scheme is based on lattice codes within the framework of rational dimensions. In this section, we show that with linear beamforming scheme and constant channels, at least $\frac{3}{2}$ DoF can be achieved for almost all channel coefficients. The key to this result is asymmetric complex signaling introduced in \cite{Cadambe_Jafar_Wang}. Let us denote the complex channel $F_{kj}=|F_{kj}|e^{i\phi_{kj}}$ and $G_{kj}=|G_{kj}|e^{i\theta_{kj}}$. We also use an alternative representation for \eqref{eq:firsthop} and \eqref{eq:secondhop} in terms of only real quantities as
\begin{eqnarray*}
\left[\begin{array}{c}\text{Re}\{Y_{R_k}\}\\ \text{Im}\{Y_{R_k}\}\end{array}\right]&=&\sum_{j=1}^2|F_{kj}|\underbrace{\left[\begin{array}{cc}\cos(\phi_{kj})& -\sin(\phi_{kj})\\ \sin(\phi_{kj})& \cos(\phi_{kj})\end{array}\right]}_{\mathbf{U}(\phi_{kj})}\left[\begin{array}{c}\text{Re}\{X_{j}\}\\ \text{Im}\{X_{j}\}\end{array}\right]+\left[\begin{array}{c}\text{Re}\{Z_{k}\}\\ \text{Im}\{Z_{k}\}\end{array}\right]\\
\left[\begin{array}{c}\text{Re}\{Y_{k}\}\\ \text{Im}\{Y_{k}\}\end{array}\right]&=&\sum_{j=1}^2|G_{kj}|\underbrace{\left[\begin{array}{cc}\cos(\theta_{kj})& -\sin(\theta_{kj})\\ \sin(\theta_{kj})& \cos(\theta_{kj})\end{array}\right]}_{\mathbf{U}(\theta_{kj})}\left[\begin{array}{c}\text{Re}\{X_{R_j}\}\\ \text{Im}\{X_{R_j}\}\end{array}\right]+\left[\begin{array}{c}\text{Re}\{N_{k}\}\\ \text{Im}\{N_{k}\}\end{array}\right]~~~~k\in\{1,2\}
\end{eqnarray*}
As a result, the original SISO complex channel becomes a real $2\times 2$ MIMO channel with scaled rotation channel matrices. The linear scheme proposed in Section \ref{sec:Timevarying} can be applied here. The only difference is that instead of diagonal channel matrices as in the SISO time-varying case, here we have scaled rotation channel matrices.  In order for the scheme work, we need to ensure linear independence of $\mathbf{v}_{1,1}$ and $\mathbf{v}_{1,2}$ in the first hop and linear independence of $\mathbf{v}_{R_1,1}$ and $\mathbf{v}_{R_1,2}$ in the second hop. From \eqref{eq:v1}, we have
\begin{eqnarray}
\mathbf{v}_{1,2}&=&|F_{11}|^{-1}|F_{12}||F_{22}|^{-1}|F_{21}|\mathbf{U}(\phi_{12}+\phi_{21}-\phi_{11}-\phi_{22})\mathbf{v}_{1,1}
\end{eqnarray}
Thus, in order for $\mathbf{v}_{1,2}$ and $\mathbf{v}_{1,1}$ to be linearly independent, $\mathbf{U}(\phi_{12}+\phi_{21}-\phi_{11}-\phi_{22})$ cannot be the identity matrix, which is ensured if the following condition is satisfied:
\begin{eqnarray}
\phi_{12}+\phi_{21}-\phi_{11}-\phi_{22} \neq 0 \mod (\pi)
\end{eqnarray}
Similarly, in the second hop, $\mathbf{v}_{R_1,1}$ and $\mathbf{v}_{R_1,2}$ are linearly independent if the following condition is satisfied:
\begin{eqnarray}
\theta_{12}+\theta_{21}-\theta_{11}-\theta_{22} \neq 0 \mod (\pi)
\end{eqnarray}
If both conditions are satisfied, $3$ real DoF can be achieved for the real MIMO channel and thus $\frac{3}{2}$ complex DoF are achieved for the original complex SISO channel. Thus we obtain the following theorem.
\begin{theorem}\label{thm:asy}
The $2 \times 2 \times 2$ IC with constant complex channel coefficients achieves $\frac{3}{2}$ DoF if all the following conditions are satisfied:
\begin{eqnarray}
\phi_{12}+\phi_{21}-\phi_{11}-\phi_{22} &\neq& 0 \mod (\pi)\\
\theta_{12}+\theta_{21}-\theta_{11}-\theta_{22} &\neq& 0 \mod (\pi)
\end{eqnarray}
\end{theorem}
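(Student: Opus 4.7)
The plan is to recast the single-input single-output complex channel as an equivalent $2\times 2$ real MIMO channel, and then port the linear aligned interference neutralization scheme developed for the time-varying case (with $M=2$) to this real MIMO setting. Concretely, I would first identify each complex scalar input and output with its real/imaginary pair, so that multiplication by $F_{kj}=|F_{kj}|e^{i\phi_{kj}}$ becomes multiplication by the scaled rotation matrix $|F_{kj}|\mathbf{U}(\phi_{kj})$ displayed in the excerpt, and similarly for the second hop. This turns the network into a real $2\times 2\times 2$ MIMO interference channel, with $M=2$ ``symbol'' dimensions coming from asymmetric complex signaling rather than from time extensions.

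Next, I would instantiate the linear scheme of Section \ref{sec:Timevarying} with $M=2$ verbatim. Source $S_1$ sends two real streams along real $2\times 1$ beamformers $\mathbf{v}_{1,1},\mathbf{v}_{1,2}$, source $S_2$ sends one stream along $\mathbf{v}_{2,1}$; choose $\mathbf{v}_{1,1}$ arbitrarily and then solve the alignment equations $\mathbf{F}_{11}\mathbf{v}_{1,2}=\mathbf{F}_{12}\mathbf{v}_{2,1}$ and $\mathbf{F}_{21}\mathbf{v}_{1,1}=\mathbf{F}_{22}\mathbf{v}_{2,1}$, which now involve scaled rotation matrices but are still solvable since each $\mathbf{F}_{kj}$ is invertible. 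At the relays, invert the effective channel to isolate $x_{1,1}$ and $x_{1,2}+x_{2,1}$ at $R_1$ and $x_{1,1}+x_{2,1}$ and $x_{1,2}$ at $R_2$, then forward along $\mathbf{v}_{R_1,1},\mathbf{v}_{R_1,2},\mathbf{v}_{R_2,1}$ chosen to satisfy the neutralization equations $\mathbf{G}_{11}\mathbf{v}_{R_1,2}=-\mathbf{G}_{12}\mathbf{v}_{R_2,1}$ and $\mathbf{G}_{21}\mathbf{v}_{R_1,1}=-\mathbf{G}_{22}\mathbf{v}_{R_2,1}$. By the analysis already given, this neutralizes all cross interference and lets $D_1$ decode $x_{1,1}$ and $x_{1,2}$ in chain, while $D_2$ decodes $x_{2,1}$ interference-free.

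The one place where the argument genuinely needs a new input (and is the only real obstacle) is establishing the linear independence of the beamforming pairs $\{\mathbf{v}_{1,1},\mathbf{v}_{1,2}\}$ and $\{\mathbf{v}_{R_1,1},\mathbf{v}_{R_1,2}\}$. In the time-varying case this came from a Vandermonde determinant in distinct diagonal entries; here the product $\mathbf{F}_{11}^{-1}\mathbf{F}_{12}\mathbf{F}_{22}^{-1}\mathbf{F}_{21}$ is the scaled rotation $|F_{11}|^{-1}|F_{12}||F_{22}|^{-1}|F_{21}|\,\mathbf{U}(\phi_{12}+\phi_{21}-\phi_{11}-\phi_{22})$, so $\mathbf{v}_{1,2}$ is a scalar multiple of $\mathbf{v}_{1,1}$ iff $\mathbf{U}(\phi_{12}+\phi_{21}-\phi_{11}-\phi_{22})$ has $\mathbf{v}_{1,1}$ as an eigenvector with real eigenvalue. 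Since a planar rotation has real eigenvectors only at angles $0$ and $\pi$, this degeneracy occurs iff $\phi_{12}+\phi_{21}-\phi_{11}-\phi_{22}\equiv 0\pmod{\pi}$, which is excluded by hypothesis. The identical computation for the second hop with $\theta_{kj}$ in place of $\phi_{kj}$ yields the second hypothesis.

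Finally, I would close by converting DoF counts: the scheme delivers three real symbols (two for $W_1$, one for $W_2$) over one complex channel use, so using the real-DoF definition $d=\lim_{P\to\infty}\mathcal{C}_\Sigma(P)/\tfrac{1}{2}\log P$ this is $3$ real DoF, i.e., $\tfrac{3}{2}$ complex DoF, establishing the theorem. I do not expect any subtle issue at the noise-amplification or decoding stage because the argument from Section \ref{sec:Timevarying} transfers unchanged once the two linear-independence conditions above are in hand.
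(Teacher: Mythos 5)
Your proposal is correct and follows essentially the same route as the paper: recast the complex SISO channel as a $2\times 2$ real MIMO channel via asymmetric complex signaling, run the $M=2$ linear aligned-interference-neutralization scheme, and observe that the only new ingredient needed is the linear independence of each beamforming pair, which holds precisely when the rotation matrix $\mathbf{U}(\phi_{12}+\phi_{21}-\phi_{11}-\phi_{22})$ (resp.\ $\mathbf{U}(\theta_{12}+\theta_{21}-\theta_{11}-\theta_{22})$) is not $\pm\mathbf{I}$, i.e.\ when the stated phase conditions hold. Your eigenvector phrasing of the degeneracy condition is in fact slightly cleaner than the paper's remark that $\mathbf{U}(\cdot)$ ``cannot be the identity,'' since the angle $\pi$ case (giving $-\mathbf{I}$) must also be excluded, and both are captured by your criterion.
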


\begin{corollary}\label{corollary:asy}
For the $2 \times 2 \times 2$ IC with constant complex channel coefficients, at least $\frac{3}{2}$ DoF can be achieved with linear scheme for almost all values of channel coefficients.
\end{corollary}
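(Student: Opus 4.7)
The plan is to derive Corollary \ref{corollary:asy} directly from Theorem \ref{thm:asy} by showing that the two phase conditions appearing in the theorem hold with probability one when the channel coefficients are drawn from a continuous distribution. Since Theorem \ref{thm:asy} already delivers $\frac{3}{2}$ DoF whenever those two phase inequalities hold, the corollary reduces to a measure-theoretic statement about the set of channel realizations on which they can fail.

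First, I would fix the representation $F_{kj}=|F_{kj}|e^{i\phi_{kj}}$ and $G_{kj}=|G_{kj}|e^{i\theta_{kj}}$ used in the preceding section, and observe that since the joint distribution of the complex coefficients $(F_{kj},G_{kj})_{k,j\in\{1,2\}}$ is continuous on $\mathbb{C}^{8}$ (with magnitudes bounded away from zero and infinity, as stipulated in the channel model), the induced joint distribution of the eight phases $(\phi_{kj},\theta_{kj})$ is also continuous on the torus. Define the ``bad'' sets
\begin{eqnarray*}
\mathcal{B}_F &=& \{(\phi_{11},\phi_{12},\phi_{21},\phi_{22}) : \phi_{12}+\phi_{21}-\phi_{11}-\phi_{22} \equiv 0 \pmod{\pi}\},\\
\mathcal{B}_G &=& \{(\theta_{11},\theta_{12},\theta_{21},\theta_{22}) : \theta_{12}+\theta_{21}-\theta_{11}-\theta_{22} \equiv 0 \pmod{\pi}\}.
\end{eqnarray*}

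The key observation is that $\mathcal{B}_F$ is the countable union of the affine hyperplanes $\{\phi_{12}+\phi_{21}-\phi_{11}-\phi_{22} = k\pi\}$ for $k\in\mathbb{Z}$, each of which is a three-dimensional subset of the four-dimensional phase space. Under any continuous (absolutely continuous with respect to Lebesgue) distribution on the phases, each such hyperplane has measure zero, and a countable union of measure-zero sets has measure zero; hence $\Pr(\mathcal{B}_F)=0$. By exactly the same argument $\Pr(\mathcal{B}_G)=0$. A union bound then gives $\Pr(\mathcal{B}_F\cup\mathcal{B}_G)=0$, so the complement---the event on which both hypotheses of Theorem \ref{thm:asy} are satisfied---has probability one.

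Invoking Theorem \ref{thm:asy} on this probability-one event completes the proof of the corollary. There is no genuine obstacle here: the only subtlety worth mentioning explicitly is that we need the distribution of the \emph{phases} (rather than the real and imaginary parts themselves) to be continuous, which follows immediately from the continuity of the joint distribution of the complex coefficients once we observe that the phase map $z\mapsto \arg(z)$ is smooth away from the origin and the magnitudes are bounded away from zero by the standing assumption in the channel model.
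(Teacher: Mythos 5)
Your argument is correct and follows exactly the same route as the paper: the paper's proof is a one-line observation that the exceptional set in Theorem \ref{thm:asy} has measure zero, and you have simply spelled out why (countable union of phase hyperplanes, each null under a continuous distribution, plus a union bound). No substantive difference, just more detail.
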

\begin{proof}
Since the exceptions of conditions in Theorem \ref{thm:asy} represent a subset of channel coefficients of measure 0, Theorem \ref{thm:asy} implies Corollary \ref{corollary:asy}.
\end{proof}

\section{Extensions}
In previous sections, we show that the min-cut outer bound of 2 DoF can be achieved for the $2\times 2\times 2$ interference network for generic channel coefficients. The result can be easily extended to more than two hops. Therefore, for the two sources and two destinations layered multihop interference network, regardless of the number of hops, the min-cut 2 DoF can be achieved almost surely. To see this, suppose we have $M$ hops. Let all relays after the second hop simply amplify and forward their signals with generic amplification factors. This reduces the network to an effective 2-hop setting. Hence, the DoF results apply. In fact, for more than 2 hops, aligned interference neutralization may not be needed to achieve 2 DoF. Simply amplify and forward received signals at relays may achieve exactly 2 DoF. To see this, let us consider again why exactly 2 DoF cannot be achieved with amplify and forward received signals at relays for 2 hops. From \eqref{eq:neu1} and \eqref{eq:neu2}, two equations need to be satisfied for interference neutralization at both destinations. Without loss of generality, $\alpha_1$ can be normalized to be one, leaving one variable $\alpha_2$. Since there are two equations and one variable, the problem does not admit a solution for generic channel coefficients. However, as the number of hops increases, more than one variable (amplification factor) are involved. Since the number of multivariate polynomial equations that needs to be satisfied is always 2, one at each destination which requires the effective coefficient of the interfering symbol equal to zero, the problem may have solutions.

Another extension of the result is to the multiple-input multiple-output (MIMO) channel where each node is equipped with $M$ antennas. The linear scheme proposed for SISO time-varying case with $M$ symbol extensions can be directly carried over to the MIMO case. The only difference is that for MIMO channel, the channel matrices are no long in a diagonal form. As a result, to ensure $\mathbf{v}_{1,1},\cdots,\mathbf{v}_{1,M}$ and  $\mathbf{v}_{R_1,1},\cdots,\mathbf{v}_{R_1,M}$ are linearly independent, it can be shown that  $\mathbf{F}_{11}^{-1}\mathbf{F}_{12}\mathbf{F}_{22}^{-1}\mathbf{F}_{21}$ and $\mathbf{G}_{11}^{-1}\mathbf{G}_{12}\mathbf{G}_{22}^{-1}\mathbf{G}_{21}$ should have $M$ distinct eigenvalues. Thus, with these two conditions satisfied, $2M-1$ DoF can be achieved for the $2\times 2\times 2$ MIMO interference networks.

\section{Conclusion}
We explore the DoF for the $2\times 2\times 2$ interference channel. We show that the min-cut outer bound value of 2 DoF can be achieved for almost all channel coefficients regardless of whether the channel is time-varying or constant. The key to this result is a new idea called aligned interference neutralization which combines the ideas of interference alignment and interference neutralization. For the time-varying case our aligned interference neutralization scheme is based on linear beamforming schemes over symbol extensions. The same scheme is translated into the rational dimensions framework for the case of constant channel coefficients. This is particularly interesting because the rational dimensions framework has been used previously for interference alignment, but not for interference neutralization. One limitation of the rational dimensions framework is that the DoF guarantees are provided for almost all values of channel coefficients but cannot be made explicitly for any \emph{given} realization of channel coefficients. Linear beamforming schemes on the other hand allow explicit DoF guarantees for given channel realizations. To provide explicit DoF guarantees with constant channels, we apply the asymmetric complex signaling scheme with aligned interference neutralization to the $2\times 2\times 2$ IC and find explicit sufficient conditions on the channel coefficients to guarantee that the channel has at least $\frac{3}{2}$ DoF. Interestingly, these conditions only depend on the phases of the channels and are obtained through asymmetric complex signaling with linear beamforming scheme\cite{Cadambe_Jafar_Wang}.

The result for $2\times 2\times 2$ IC is also extended to more than 2-hop IC with 2 sources and 2 destinations, for which the min-cut outer bound 2 DoF can still be achieved for almost all channels. Thus, regardless of the number of hops, 2 sources and 2 destinations multihop IC has 2 DoF almost surely.  Extensions to the setting with more than 2 sources and 2 destinations are challenging. This is an interesting question to be pursued in future work.

\bibliography{Thesis}

\end{document}